\newtheorem{claim}{Claim}
\newtheorem{lemma}{Lemma}
\newtheorem{proposition}{Proposition}
\newtheorem{corollary}{Corollary}
\theoremstyle{definition}
\newtheorem{definition}{Definition}
\def\fps@figure{!htbp}
\def\fps@table{!htbp}
\newcommand*{\ps}{p_2}
\renewcommand{\pm}{p_1}
\newcommand*{\ls}{l_2}
\newcommand*{\lm}{l_1}
\newcommand*{\cs}{c_2}
\newcommand*{\cm}{c_1}
\newcommand*{\an}{a_0}
\newcommand*{\am}{a_1}
\newcommand*{\as}{a_2}
\newcommand*{\sm}{t_1}
\newcommand*{\type}{t}
\newcommand*{\Type}{T}
\newcommand*{\supp}{\operatorname{supp}}
\newcommand*{\prior}{q_0}
\newcommand{\RR}{\mathbb{R}}
\newcommand{\Prior}{\mu_0}
\newcommand{\m}{m}
\newcommand{\M}{M}
\newcommand{\smax}{e_{\mathsf{m}}}
\newcommand{\qc}{\operatorname{qcav}}
\newcommand{\cav}{\operatorname{cav}}
\newcommand{\ev}{\operatorname{ev}}
\newcommand{\eu}{\operatorname{eu}}
\newcommand{\s}{e}
\newcommand{\bq}{\bar{q}}
\newcommand{\qbar}{\bar{q}}
\newcommand{\pbar}{\bar{\pvec}}
\newcommand{\Pbar}{\bar{P}}
\newcommand{\FR}{\text{FR}}
\newcommand{\NR}{\text{NR}}
\newcommand{\pvec}{p}
\newcommand{\qtilde}{\tilde{q}}
\newcommand{\qlbar}{\underline{q}}
\newcommand{\PP}{\mathbb{P}}
\newcommand{\GG}{\mathbb{G}}
\newcommand{\BB}{\mathbb{B}}
\newcommand{\uchi}{\underline{\chi}}
\newcommand{\qcav}{\operatorname{qcav}}
\newcommand{\wcard}{{}\cdot{}} %
\newcommand{\cG}{\mathcal{G}}
\newcommand{\textif}{\text{if}\,\,}
\newcommand{\pc}{\operatorname{pc}}
\renewcommand\operator@font{\sf}
\title{Credibility in Credence Goods Markets}
\author{
  Xiaoxiao Hu\thanks{Email: \texttt{xhuah@connect.ust.hk.}} 
  \and 
  Haoran Lei\thanks{Hunan University. Email: \texttt{hleiaa@connect.ust.hk.}}
}
\begin{document}
\maketitle
\begin{abstract}
  An expert seller chooses an experiment to influence a client's purchasing decision, but may manipulate the experiment result for personal gain. 
  When credibility surpasses a critical threshold, the expert chooses a fully-revealing experiment and, if possible, manipulates the unfavorable result. In this case, a higher credibility strictly benefits the expert, whereas the client never benefits from the expert’s services. 
  We also discuss policies regarding monitoring expert’s disclosure and price regulation. When prices are imposed exogenously, monitoring disclosure does not affect the client’s highest equilibrium value. A lower price may harm the client when it discourages the expert from disclosing information.

\bigskip

\noindent\textbf{Keywords:} Credence Goods, Credibility, Persuasion, Cheap Talk, Limited Commitment

\noindent\textbf{JEL Code:} D82, D83, L12

\end{abstract}

\clearpage

\thispagestyle{empty}
\tableofcontents
\clearpage

\section{Introduction}
Credence goods are products or services whose qualities are hard for consumers to assess.
Sellers of credence goods often act as experts, advising consumers on the suitability of the goods.
Examples include medical services, automobile repairs, and taxi rides.
Experts in
these markets have a substantial information 
advantage over consumers, and may
exploit it for personal gain.
For example, a doctor may recommend expensive but unnecessary medical treatments, knowing that the patient cannot detect the dishonest behavior.
Such situations abound in real life.%
\footnote{
  In the United States, a doctor was sentenced to prison for administering unnecessary injections and distributing medically unnecessary opioids
  (\url{https://www.justice.gov/opa/pr/doctor-sentenced-role-illegally-distributing-66-million-opioid-pills-and-submitting-250/}).
  A car mechanic from Texas was charged with fraud for using false parts on vehicles and charging customers for uninstalled parts
  (\url{https://abc13.com/richard-sisney-jr-auto-shop-owner-arrested-montgomery-county-fraud-new-caney/5416238/}).
  In the United Kingdom, an investment firm collapsed after the company was revealed to have misled investors about the risks associated with its high-yield bonds
  (\url{https://www.ft.com/content/b881f191-16ac-4eed-84d0-bd6c79d461d7}).%
}

The prevalence of fraud in credence goods markets raises important questions about protecting consumers from exploitation.
One possible solution is to monitor experts' information disclosure.
For example, authorities may require experts to disclose accurate product information,
or establish third-party assessment programs for the quality of expert advice.
However, the effects of such monitoring policies are uncertain, especially when the experts have pricing power.

This paper studies how different levels of monitoring intensity affect the players' welfare.
We examine a basic two-by-two credence goods setup, which involves an expert and a client.
The client is uncertain of the severity of her problem,
and thus can not decide the appropriate treatment.
The expert sets prices for treatments and designs an experiment regarding the problem type.
With some exogenous probability, the expert is monitored and has to truthfully reveal the experiment result.
Otherwise, he can freely manipulate the experiment result.
We refer to the probability of being monitored as the expert's \textit{credibility}.
Additionally, whether the expert is being monitored or not 
is his private information and cannot be observed by the client.

We find that credibility plays a critical role in determining the equilibrium outcome distribution(s).
When credibility is above a critical threshold, there exists a unique equilibrium outcome distribution,
in which the expert designs a fully-revealing experiment and
sets prices that leave no surplus for the client. Moreover, 
the expert's welfare is strictly increasing in his credibility,
but the client never benefits from the expert's services. 
When credibility is below the threshold, 
there exist multiple equilibrium outcome distributions.
Among them, the client's welfare varies, while the expert's welfare remains unchanged and is independent of credibility.
The critical threshold for credibility depends on the distribution of the problem type,
and can be zero when the serious problem is sufficiently likely.

We also examine policies regarding monitoring experts' disclosure and price regulation. 
Concerning protecting clients' welfare,
our findings do not support policies of monitoring experts' disclosure.
As long as the expert controls prices, increasing monitoring intensity (i.e., increasing credibility) generally
does not benefit the client.
Furthermore, if prices are exogenously imposed, the client's maximum equilibrium value
is independent of the monitoring intensity (see \Cref{prp:welfare}).
Among all equilibria with exogenous prices, the equal-margin fully-disclosing equilibria maximize the social welfare.

\subsection{Literature} 
This paper is related to the literature on credence goods, which started from \cite{darbykarni}.
Numerous subsequent works have investigated experts' incentives to defraud clients and
the influences of different institutional characteristics on expert behavior.
Surveys of the extensive literature can be found in \cite{dulleck} and \cite{survey2020jbef}.
To prevent market breakdown caused by information asymmetry, 
existing works either assume that (a)
an expert is liable to fix a client's problem
\citep*{pitchik1987,wolinsky1993,fong2005,liu2011,hu2020}, 
or that (b) expert-provided goods or services are
verifiable by the client \citep{emons1997,dulleck2009,fong2014,bester2018}.
In a comprehensive survey,
\cite{dulleck} term the above two assumptions
as, respectively, the \textit{liability assumption} and the \textit{verifiability assumption}.
This paper imposes the verifiability assumption while excluding the liability assumption.
This is because the client is always allowed to select her most desirable treatment.

As previously mentioned, this paper investigates how the expert's credibility affects equilibrium outcomes and players' welfare.
To the best of our knowledge, this topic has received limited attention in the credence goods literature.
Previous studies have examined related notions such as reputation and trust, mostly within the context of repeated games.
\cite{fong2022trust} investigate the role of trust in situations where the market breaks down in a one-shot game due to inefficiency of the minor treatment.
They show that trust can be sustained in a repeated setting, albeit with some loss of efficiency.
\cite{fong2018reputation} examine the effects of the seller's reputation. 
They show that efficiency is unattainable in a one-shot game but achievable in a repeated context.
\cite{ely2003} study a model featuring heterogeneous experts.
While the good experts' interests are aligned with those of the clients, the bad experts consistently recommend the serious treatment.
They show that good experts' incentives to avoid bad reputations can inadvertently lead to the elimination of profitable interactions and a loss of surplus.

Lastly, there is a growing body of credence goods literature that
allows a flexible transmission protocol between the expert and the client.
\cite{hu2022} model the expert's recommendation as a cheap-talk message,
allowing the client to freely choose any treatment after receiving the expert's recommendation. 
In contrast, this paper assumes that the expert's message may come from a credible source with a given probability.
As discussed later, the binary version of \cite{hu2022} can be viewed as an extreme case of our model.
\cite{li2023} assume divisible credence goods and that the expert has full commitment to his chosen experiment.
They characterize the expert's optimal unit price and information structure, 
and discuss when the client consumes beyond her needs.
\cite{emons2022} assume that the expert only observes the most suitable treatment for the client, and model the expert's recommendation as a subset of available treatments.

\section{Model} \label{sec:model}

There are two players: Client (she) and Expert (he).
Client has a problem, which is either minor ($t_1$) or
serious ($t_2$). 
Client is uncertain of the \textit{problem type} $t$,
whose distribution follows:
\[ 
  \Prior (t_2) = \prior \text{ and } \Prior (\sm) = 1-\prior.
\]
Expert is able to provide two kinds of treatments: the minor treatment ($\am$) and the serious treatment ($\as$).
The two treatments differ in efficacy: the minor treatment cures the minor problem only whereas the serious treatment cures both the minor and the serious problems.
Denote the costs of minor and serious treatments as $\cm$ and $\cs$, respectively.

\paragraph{Information.}
Before Client's consultation, Expert publicly chooses an experiment, generating a message $m$ whose distribution
is contingent on the problem type.
Let $\Type \equiv \{\sm, t_2\}$ and an experiment is specified by
$$
ξ: T \to  \Delta M, \text{ where $M$ is the message space.}
$$
Assume that the generated message is privately observed by Expert.
With some \textit{exogenous} probability $\chi \in (0,1)$, Expert is
\textit{credible} and truthfully reports the generated message to Client;
with the complementary probability $1 - \chi$,
Expert is \textit{non-credible} and freely chooses which message to send.%
\footnote{There are two interpretations of the limited credibility setting.
    First, as mentioned in the introduction, the probability $\chi$ may reflect the
    monitoring intensity. For example, $\chi$ can be the
    probability that Expert will be monitored by some third party.
    Second, \cite{lipnowski2022} interpret the expert's limited credibility as the result of
    ``weak institution.'' Specific to our credence-good setting, consider a diagnostic clinic (Expert) that devises tests to determine the patient's problem type. If the clinic is not independent, it may be susceptible to the influence of pharmaceutical vendors. In such cases, the clinic could potentially distort the diagnostic outcome to benefit the vendors.
}
Assume that Expert privately observes his credibility type, and that he observes it after choosing the experiment $ξ$.

Upon receiving the message, Client chooses $a \in A \equiv \{\an,\am, \as\}$
where $\an$ means purchasing no treatment.
The ``verifiability assumption'' of credence goods 
(i.e., the performed treatment is observable and verifiable by Client) is imposed.
Timing of the game is as follows:

\begin{enumerate}
\item
  Nature determines Client's problem type $\type \in \{\sm, t_2\}$ according to the distribution $\Prior$.
\item
  Expert posts a price list $\pvec = (\pm, \ps) \in \RR_+^2$, where $\pm$ and $\ps$ are the prices for the minor and the serious treatments respectively. 
  Simultaneously, Expert publicly announces an experiment $ξ: \Type \to  \Delta \M$. 
\item 
  Expert privately observes the realized problem type%
      \footnote{We assume that Expert observes the problem type, allowing the sent message to 
      depend on it whenever Expert can manipulate the experiment result. However, in reality, Expert may only observe the experiment result and cannot observe the problem type. We investigate this possibility in \Cref{sec:discussion-type}. %
      }
  and the message generated by the experiment.   
  With probability $\chi$, the generated message is truthfully disclosed to Client.
  Otherwise, Expert strategically sends some message $m \in \M$ to Client.
\item 
  Upon receiving the message, Client
  chooses an action $a \in A = \{\an, \am, \as\}$.
\item 
  Client and Expert obtain their payoffs, $u^C(t, a, \pvec)$ and $u^E(a, \pvec)$, respectively:
  \begin{equation*}
   u^C(t_i, a_j, \pvec) = -p_j -\mathbf{1}(j < i) \cdot l_i \, ,
  \quad
   u^E ( a_j, \pvec) = p_j - c_j
  \end{equation*}
  where $\mathbf{1}(j < i)$ is the indicator function that takes the value of $1$ when $j < i$ and $0$ otherwise.

  \Cref{tab:payoff} shows the ex post payoffs for both players, where each column corresponds to some Client's final decision and each row corresponds to a specific problem type.
  In each table cell, the left term represents Client's payoff and the right term represents Expert's payoff.
\end{enumerate} 

\begin{table}[H]
\centering
\captionsetup{justification=centering,margin=1.6cm}
\begin{tabular}{llll}
  &\quad $\an$ &\quad\quad\quad $\am$  &\quad\quad $\as$ \\ 
  \cline{2-4} 
  \multicolumn{1}{l|}{$\sm$} & \multicolumn{1}{c|}{$-\lm,0$} & 
  \multicolumn{1}{c|}{$ - \pm, \pm - \cm$}  & \multicolumn{1}{c|}{$-\ps, \ps - \cs$} \\ \cline{2-4} 
  \multicolumn{1}{l|}{$t_2$} & \multicolumn{1}{c|}{$-\ls,0$} & 
  \multicolumn{1}{c|}{$-\pm - \ls,\pm - \cm$} & \multicolumn{1}{c|}{$-\ps, \ps - \cs$}  \\ \cline{2-4} 
  \end{tabular}
  \caption{Client's and Expert's ex post payoffs}
\label{tab:payoff}
\end{table}

\paragraph{Assumptions} Throughout this paper, we impose the following assumptions:
(i) the message space $M$ is finite and sufficiently large;%
    \footnote{Since $|T|=2$ and $|A|=3$, it suffices to impose that $|M| \ge 3$.}
(ii) the serious treatment is more costly than the minor treatment: $\cs > \cm$;
(iii) both treatments are efficient: $\ls > \cs$ and $\lm > \cm$;  
(iv) the serious treatment implemented on the serious problem has a higher surplus than that of 
the minor treatment implemented on the minor problem: $\ls - \cs > \lm - \cm$.
Note that assumptions (ii) and (iv) imply $l_2 > l_1$.

\subsection{Solution concept}
Our solution concept is the \textit{Expert-preferred perfect Bayesian equilibrium} (henceforth, \textit{equilibrium}).
We denote the entire game as $\cG$ and the subgame following Expert posting price list $\pvec$ as $\cG_{\pvec}$.
Refer to a perfect Bayesian equilibrium of $\cG_{\pvec}$ as a \textit{$\pvec$-equilibrium}, defined as below.

\begin{definition}\label{def:p-eq}
  A \textit{$\pvec$-equilibrium} consists of four maps:
  the experiment $ξ: T \to Δ M$;
  Expert's signalling strategy $σ: T \to Δ M$;
  Client's strategy $ρ : M \to Δ A$;
  and Client's belief updating
  $η : M \to Δ T$; such that 
  \begin{enumerate}
  \item 
    Given Expert's chosen experiment $ξ$ and signalling strategy $σ$, Client's posterior after receiving $m \in \M$ is
    obtained using Bayes's rule:
    $$
    \eta(\type \mid \m) = \frac
        {\Prior(t) [\chi ξ(m \mid t) + (1 - \chi)σ(m \mid t)]}
        {\sum_{t' \in T} \Prior (t') [\chi ξ(m \mid t') + (1 - \chi)σ(m \mid t') ] }\quad \text{ for all $t \in T$;}
    $$
  
  \item
    Given the belief updating $\eta$, Client's strategy
    $\rho (\wcard \mid \m)$ after receiving message $m$ is supported on 
    $$
    \arg\max_{a \in A} \sum_{\type \in \Type} 
      \eta(\type \mid \m) u^C(a, \type , \pvec) \, ;
    $$ 
    
  \item 
    Given Client's strategy $\rho$, Expert's signalling strategy
    $σ (\wcard \mid \type)$ is supported on 
    $$
    \arg\max_{m \in \M} \sum_{a \in A} \rho(a \mid \m) u^E (a, \pvec) \quad \text{for all $\type \in \Type$.}
    $$ 
  \end{enumerate}    
\end{definition}

When there exist multiple $\pvec$-equilibria for some price list $\pvec$, we focus on the \textit{Expert-optimal $\pvec$-equilibria} that yield the highest payoffs for Expert among all $\pvec$-equilibria.
Denote by $(ξ, \eta, \rho, σ)$ an Expert-optimal $\pvec$-equilibrium and then Expert's ex ante payoff is:
\[
v^*_{\chi} (\Prior ; \pvec) = 
  \sum_{t \in T} \Prior(t) \Big( 
    \sum_{a \in A} \rho (a \mid {\hat{m}(t)}) u^E(a, \pvec)
  \Big)
\text{ where $\hat{m}(t) \in \supp σ(\wcard \mid t)$.}  
\]
Refer to $v^*_{χ} (\Prior ; \pvec)$ as Expert's \textit{$\pvec$-equilibrium value}.

Finally, Expert chooses some price list to maximize his ex ante payoff of the entire game.
Without loss,%
    \footnote{When  $p_i > l_i$ for some $i \in \set{1,2}$, 
    Client will not purchase treatment $a_i$ at any posterior, and decreasing such $p_i$ to $l_i$ will not affect the outcome distribution.
    If $p_i < c_i$ for some $i \in \set{1,2}$, increasing such $p_i$ to $c_i$ will weakly increase Expert's ex ante payoff.
    Finally, when $p_2 < p_1$, Client will not purchase treatment $a_1$ at any posterior, and decreasing such $p_1$ to $p_2$ will not affect the outcome distribution.
    }
we restrict attention to the set of price lists
$P \equiv \{(p_1, p_2) \colon p_1 \in [c_1, l_1], p_2 \in [c_2, l_2] , p_2 \ge p_1 \}$.
Expert's maximization problem is:
$$
\ev^*_{χ} (\Prior) = \max_{\pvec \in P} v^*_{\chi} (\Prior ; \pvec),
$$
where $\ev^*_{χ} (\Prior)$ is referred to as Expert's \textit{equilibrium value}.

\section{Benchmarks}
We first analyze the two extreme cases: no credibility ($\chi = 0$) and
full credibility ($\chi = 1$).
As the problem type is binary, we slightly abuse notations and
denote Expert's $\pvec$-equilibrium value and equilibrium value by, respectively, $v^*_{χ} (\prior ; \pvec)$ and $\ev^*_{χ} (\prior)$, where $\prior = \Prior (t_2)$.

\subsection{No credibility}
When $\chi = 0$, our model reduces to the scenario in which Expert first sets prices and then employs cheap talk \citep{crawford1982}
to influence Client's purchasing decision.
This situation has been analyzed in \cite{hu2022}.
Below we briefly summarize their main results 
and introduce notations for later usage.

Fixing some price list $\pvec \in P$,
let $a^* (q; \pvec)$ be the set of Client's (possibly mixed)
optimal choices at the belief $q$. %
The corresponding set of Expert's payoffs is $V(q ; \pvec) \equiv \set{u^E(\tilde a , \pvec) \mid \tilde a \in a^* (q ; \pvec) }$.
Let $v (q ; \pvec) \equiv \max V(q ; \pvec)$ and refer to $v (q ; \pvec)$ as Expert's belief-based \textit{indirect utility}.
By Berge's maximum theorem,
$v (\wcard ; \pvec) : [0,1] \to \RR$ is an upper semicontinuous function.
Denote by  $\qc v (\wcard ; \pvec)$ the \textit{quasiconcavification} of $v (\wcard ; \pvec)$;
that is, $\qc v (\wcard ; \pvec)$ is the lowest quasiconcave function such that $\qc v (q' ; \pvec) \ge v (q' ; \pvec)$ for all
$q' \in [0,1]$. %
\cite{lipnowski2020} show that Expert's $\pvec$-equilibrium value given $χ = 0$ is exactly the quasiconcavification of his
belief-based indirect utility:
$v^*_{0} (\prior ; \pvec) = \qc v (\prior ; \pvec)$.
Let $π (\prior) \equiv  \max_{\pvec \in P}v(\prior ; \pvec)$ be Expert's profits under discriminatory pricing.
\cite{hu2022} further characterize Expert's equilibrium value as the quasiconcavification of $π$:
$\ev^*_{0} (\prior) = \qc π(\prior)$. That is,
\begin{align*}
\ev^*_{0} (\prior ) = 
	\begin{cases}
		\lm - \cm & \text{ if } \prior \in \big[0, \frac{\cs - \cm}{\ls - \lm}\big]\\
		\prior (\ls - \lm) + (\lm - \cs) & \text{ otherwise.}  
	\end{cases}
\end{align*}

\subsection{Full credibility}

When $\chi = 1$, our model reduces to the scenario in which Expert first sets prices and then employs persuasion \citep{kamenica2011}
to influence Client's purchasing decision.
Then there exists an equilibrium in which Expert sets $\ps = \ls$, 
$\pm = \lm$ and designs a fully-revealing experiment.
It can be verified that this equilibrium allows Expert to extract the full surplus.
Denote by $\cav π( \wcard)$ the \textit{concavification} of $π(\wcard)$;
that is, $\cav π(\wcard)$ is the lowest concave function that is everywhere above $π(\wcard)$.
Then Expert's equilibrium value is
$$
\ev^*_{1} (\prior ) = \prior(l_2 - c_2) + (1 - \prior)(l_1 - c_1) = \cav π(\prior).
$$

To sum up, Expert's equilibrium values with no credibility and with full credibility are $\qc π(\prior)$ and $\cav π(\prior)$, respectively.
\Cref{fig:profit} illustrates the two extreme cases:
the dashed line and the solid curve represents Expert's equilibrium value with full credibility and with no credibility, respectively.

\begin{figure}[H]
  \centering
  \includegraphics[]{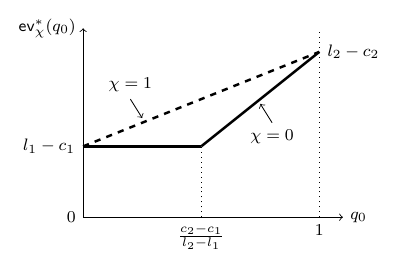}
  \caption{Expert's equilibrium values when $χ=0$ and when $χ=1$}
  \label{fig:profit}
\end{figure}

\section{Limited credibility}

We turn to our main model for $\chi \in (0,1)$, meaning that Client is uncertain
of whether Expert is credible or non-credible.
Divide the space of price lists $P$ into three regions:
$P^1 =  \set{(p_1, p_2) \in P \colon p_1 - c_1 > p_2 - c_2}$,
$P^2 =  \set{(p_1, p_2) \in P \colon p_2 - c_2 > p_1 - c_1}$ and 
$\Pbar = \{ (p_1, p_2) \in P \colon p_1 - c_1  = p_2 - c_2\}$.
We first argue that it is never optimal for Expert to choose $\pvec$ from $P^1$.
Specifically, for any  $\pvec \in P^1$,
Expert's $\pvec$-equilibrium value must be strictly lower than his 
$\pvec'$-equilibrium value
where $\pvec' = (l_1, l_1 - c_1 + c_2)$.

\begin{lemma}  \label{lmm:1}
  Fixing any $\pvec \in P^1$, we have
  $v_χ^* (\prior ; \pvec) < v_χ^* (\prior ; l_1, l_1 - c_1 + c_2)$ 
  for all $\prior \in (0,1)$ and $χ \in (0,1)$.
\end{lemma}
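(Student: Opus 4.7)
The plan is to sandwich both values. I will show $v^*_\chi(\prior;\pvec')=\lm-\cm$ exactly at $\pvec'=(\lm,\lm-\cm+\cs)$, upper-bound $v^*_\chi(\prior;\pvec)$ by $\pm-\cm$ for every $\pvec\in P^1$, and then settle the boundary case $\pm=\lm$ via Bayes-plausibility.

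At $\pvec'$ the per-action margins for $\an,\am,\as$ are $0$, $\lm-\cm$, and $(\lm-\cm+\cs)-\cs=\lm-\cm$, so $v(q;\pvec')\le\lm-\cm$ at every belief $q$ and hence $v^*_\chi(\prior;\pvec')\le\lm-\cm$. A fully-revealing equilibrium attains this bound: the experiment sends a distinct message in each state; Expert signals truthfully when non-credible (he is indifferent, as both messages induce margin $\lm-\cm$); Client purchases $\am$ after the ``minor'' message (the Expert-preferred tie-break at posterior $0$, exploiting $\pm=\lm$) and $\as$ after the ``serious'' message (strictly optimal because $\ps'=\lm-\cm+\cs<\ls$ by assumption (iv)). Hence $v^*_\chi(\prior;\pvec')=\lm-\cm$. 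For $\pvec\in P^1$, Expert's equilibrium payoff is an expectation of per-action margins drawn from $\{0,\pm-\cm,\ps-\cs\}$, whose maximum is $\pm-\cm$ by definition of $P^1$; therefore $v^*_\chi(\prior;\pvec)\le\pm-\cm\le\lm-\cm$. When $\pm<\lm$ this already yields the strict inequality.

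The hard part is the boundary case $\pm=\lm$, where both bounds meet at $\lm-\cm$. I would close it by contradiction: suppose $v^*_\chi(\prior;\pvec)=\lm-\cm$ in some $\pvec$-equilibrium. Because only $\am$ realizes the maximal margin $\lm-\cm$ (since $\ps-\cs<\pm-\cm$ on $P^1$), Client must play $\am$ with probability one at every on-path message. But Client's best-response condition $\pm\le(1-q)\lm$ combined with $\pm=\lm$ forces the posterior $\eta(\ss\mid m)=0$ at every on-path $m$. Bayes-plausibility then gives $\prior=\sum_m \Pr(m)\,\eta(\ss\mid m)=0$, contradicting $\prior\in(0,1)$. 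This delivers the strict inequality on the boundary and completes the argument.
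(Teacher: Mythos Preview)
Your proposal is correct and follows essentially the same route as the paper's proof. Both arguments (i) bound $v_\chi^*(\prior;\pvec')$ from below by $l_1-c_1$ via an equal-margin fully-disclosing equilibrium, and (ii) bound $v_\chi^*(\prior;\pvec)$ strictly below $l_1-c_1$ using the maximum-margin cap $p_1-c_1$ together with a Bayes-plausibility contradiction in the corner case; the only cosmetic difference is that you split cases on whether $p_1=l_1$, whereas the paper splits on whether $v_\chi^*(\prior;\pvec)=p_1-c_1$.
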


\begin{proof}
  See \Cref{app:lmm-1}.
\end{proof}

\Cref{lmm:1} allows us to focus on price lists from $\Pbar \cup P^2$
for calculating Expert's equilibrium value.
In this section, we characterize $v^*_{\chi}(\prior; \pvec)$ for $\pvec \in \Pbar \cup P^2$,
and then obtain Expert's equilibrium value by calculating the upper envelope of
$v^*_{\chi}(\prior; \pvec)$ as $\pvec$ varies.

\subsection{Expert's $\pvec$-equilibrium value}
\label{sec:p-equil}
Fixing some $\pvec \in \Pbar \cup P^2$, we establish the upper and lower bounds for Expert's $\pvec$-equilibrium value. 
On the one hand, by using the same experiment as the signaling strategy (i.e., $ξ = σ$), 
Expert can achieve the payoff when he uses cheap talk to influence Client's decision. 
Since Expert's highest payoff under cheap-talk communication
is $\qc v (\prior ; p )$ \citep{lipnowski2020}, 
it follows that
$v^*_{\chi}(\prior ; \pvec) \ge \qc v (\prior;\pvec)$ for all $\prior \in (0,1)$ and $χ \in (0,1)$.
On the other hand, Expert's highest payoff under persuasion is
$\cav v (\prior ; \pvec)$ \citep{kamenica2011}, which serves 
as an upper bound for Expert's $\pvec$-equilibrium value.
The concave and quasiconcave envelopes can be written as:
\begin{equation} %
  \notag
  \qc v (\prior ; \pvec) = \begin{cases}
    p_1 - c_1 & \text{ if } \prior < \qbar(\pvec);\\
    p_2 - c_2 & \text{ if } \prior \ge \qbar(\pvec),
  \end{cases}    
  \,
  \cav v (\prior ; \pvec) = \begin{cases}
    \lambda(\pvec) \prior  + p_1 - c_1 & \text{ if } \prior < \qbar(\pvec);  \\
    p_2 - c_2 & \text{ if } \prior \ge \qbar(\pvec),
  \end{cases}    
\end{equation}\normalsize
where the cutoff $\qbar(\pvec)$ is
\begin{equation}\label{eq:qbar}
\qbar(\pvec) = \begin{cases}
 \frac{\ps - \lm}{\ls - \lm} & \text{if } p_2 \ge l_2 - \frac{l_2 - l_1}{l_1} p_1 \\
  \frac{\ps - \pm}{l_2} & \text{otherwise}
\end{cases}
\end{equation}
and $\lambda(\pvec) = \frac{p_2 - c_2 - (p_1 - c_1)}{\qbar(\pvec)}$.
\Cref{fig:xiong666} illustrates the possible cases of
$v (\prior ; \pvec)$, $\qc v (\prior ; \pvec)$ and $\cav v (\prior ; \pvec)$,
depending on the functional form of $\qbar(\pvec)$.

\begin{figure}[!htbp]
  \begin{changemargin}{-2.5cm}{-2.5cm}
    \begin{center}
      \includegraphics[width=0.351\textwidth]{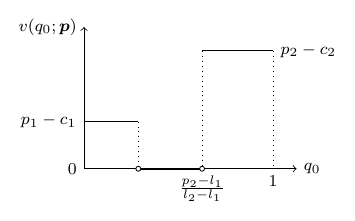}
      \includegraphics[width=0.351\textwidth]{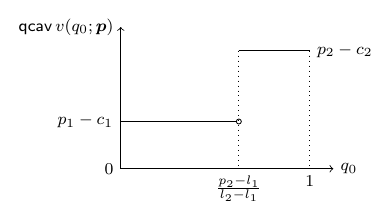}
      \includegraphics[width=0.351\textwidth]{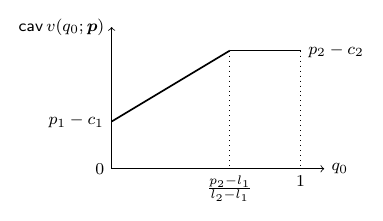}\\
      {\footnotesize (a) Case I: $p_2 \ge l_2 - \frac{l_2 - l_1}{l_1} p_1$}  
      
      \includegraphics[width=0.351\textwidth]{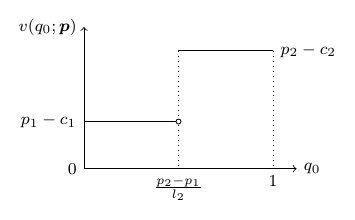}
      \includegraphics[width=0.351\textwidth]{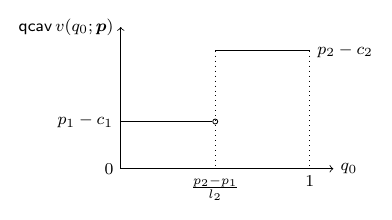}
      \includegraphics[width=0.351\textwidth]{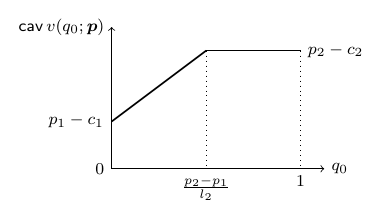}\\
      {\footnotesize (b) Case II: $p_2 < l_2 - \frac{l_2 - l_1}{l_1} p_1$}          
    \end{center} 
  \end{changemargin}
  \caption{Possible cases of $v (\prior ; \pvec)$, $\qc v (\prior ; \pvec)$ and $\cav v (\prior ; \pvec)$ \label{fig:xiong666}}
\end{figure}

When both treatments yield the same margin for Expert (i.e., $\pvec \in \Pbar$), we have
$\cav v (\prior ; \pvec) = \qc v (\prior ; \pvec) = p_2 - c_2$ for all $\prior \in [0,1]$.
Since the upper bound and the lower bound coincide,
$v^*_{\chi} (\prior ; \pvec) = p_2 - c_2$ for all $χ \in (0,1)$.
Furthermore, when the serious problem is sufficiently likely such that $\prior \ge \qbar(\pvec)$, we have
$\cav v (\prior ; \pvec) = \qc v (\prior ; \pvec) = p_2 - c_2$ and thus
$v^*_{\chi} (\prior ; \pvec) = p_2 - c_2$ for all $χ \in (0,1)$.
Below we characterize Expert's $\pvec$-equilibrium value when
$\pvec \in P^2$ and $\prior < \qbar(\pvec)$.
It turns out that then $v^*_{\chi}(\prior; \pvec)$ will be either $\qc v (\prior;\pvec)$ or $\cav v (\prior ; \pvec)$,
depending on whether credibility is above the threshold
$$
\uchi ( \prior ; \pvec) ≡ \frac{\qbar(\pvec) - \prior}{ \qbar(\pvec) (1 - \prior)}.
$$

\begin{proposition}
  \label{prp:p-eq-value}  
  If $\pvec \in P^2$ and $\prior < \qbar(\pvec)$, then Expert's $\pvec$-equilibrium value is:
  \begin{equation*}
    v^*_{\chi}(\prior; \pvec) = \begin{cases}
      \qc  v (\prior ;\pvec)  & \text{ if } χ < \uchi ( \prior ; \pvec); \\
      \cav v (\prior ;\pvec)  & \text{ if } χ \ge \uchi ( \prior ; \pvec).
    \end{cases}  
  \end{equation*}
\end{proposition}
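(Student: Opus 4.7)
The plan is to tighten the known sandwich $\qc v(\prior;\pvec) \le v^*_\chi(\prior;\pvec) \le \cav v(\prior;\pvec)$ (cheap talk provides the lower bound by taking $\sigma = ξ$, Bayesian persuasion provides the upper bound). On the region under study, $\qc v(\prior;\pvec) = p_1 - c_1$ and $\cav v(\prior;\pvec) = (\prior/\qbar)(p_2-c_2) + (1-\prior/\qbar)(p_1-c_1)$, so it suffices to show the upper bound is attained when $\chi \ge \uchi$ and the lower bound is attained when $\chi < \uchi$.

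For $\chi \ge \uchi(\prior;\pvec)$, I would construct an explicit equilibrium. Consider a two-message experiment with $ξ(m_2 \mid t_2) = 1$ and $ξ(m_2 \mid t_1) = (\chi - \uchi)/\chi$, which lies in $[0,1]$ under the hypothesis; let non-credible Expert always send $m_2$ regardless of type, and have Client play $a_2$ at $m_2$ and $a_1$ at $m_1$. Using the identity $1 - \uchi = \prior(1-\qbar)/(\qbar(1-\prior))$, Bayes's rule gives $\eta(t_2 \mid m_2) = \qbar(\pvec)$ and $\eta(t_2 \mid m_1) = 0$, so Client's prescribed strategy is a best response (indifferent at $m_2$, strictly optimal at $m_1$). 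Since $p_2 - c_2 > p_1 - c_1 \ge 0$, the non-credible signalling is strictly optimal for either type. A short aggregation then yields Expert's payoff $(\prior/\qbar)(p_2-c_2) + ((\qbar-\prior)/\qbar)(p_1-c_1) = \cav v(\prior;\pvec)$.

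For $\chi < \uchi(\prior;\pvec)$, I would argue by contradiction that no $\pvec$-equilibrium yields Expert a payoff strictly above $p_1 - c_1$. Let $M^*$ be the support of the non-credible signalling strategy; since Expert's expected payoff from sending $m$ depends only on Client's response, $M^*$ is common to both types, and every $m \in M^*$ delivers the same payoff $V$. If $V > p_1 - c_1$, then Client must play $a_2$ with positive probability on $M^*$, which forces $\eta(t_2 \mid m) \ge \qbar$ for every $m \in M^*$. Aggregating over $M^*$, Bayes's rule yields $\prior\,\mu(M^* \mid t_2)(1-\qbar) \ge \qbar(1-\prior)\,\mu(M^* \mid t_1)$. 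Because both non-credible types funnel all their mass through $M^*$, we have $\mu(M^* \mid t_1) \ge 1-\chi$, and $\mu(M^* \mid t_2) \le 1$ trivially; substituting gives $1 - \chi \le \prior(1-\qbar)/(\qbar(1-\prior)) = 1 - \uchi$, i.e.\ $\chi \ge \uchi$, contradicting the hypothesis.

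The main obstacle I anticipate is the upper-bound step, because $\pvec$-equilibria may involve many messages and mixed Client responses. The key is to identify a structural invariant that works uniformly across all such equilibria; lumping the non-credible argmax messages into a single set $M^*$ and averaging the posterior constraint over it reduces the entire manipulation problem to one scalar inequality between $\chi$ and $\uchi$.
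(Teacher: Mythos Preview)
Your proof is correct and takes a genuinely different, more elementary route than the paper. The paper invokes Theorem~1 of Lipnowski, Ravid and Shishkin (2022) to recast the problem as a constrained optimization over $(\beta,\gamma,k)$, and then solves this program through a sequence of claims showing that $\gamma^*=\qbar(\pvec)$ and that the constraints force $\chi\ge\uchi$. You bypass this machinery entirely: for $\chi\ge\uchi$ you directly exhibit the equilibrium (which is in fact the same experiment the paper later derives in Proposition~\ref{prp:experiment}), and for $\chi<\uchi$ you run a bare-hands Bayesian argument, aggregating the posterior constraint $\eta(t_2\mid m)\ge\qbar$ over the non-credible argmax set $M^*$ and using only $\mu(M^*\mid t_1)\ge 1-\chi$. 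The advantage of your approach is self-containment and transparency in the binary setting; the paper's approach situates the result within the general limited-commitment framework, which makes the role of the constraints \eqref{eq:BS} and \eqref{eq:chi} explicit and would extend more readily to richer type or action spaces. One small point worth making explicit in your write-up: the link between ``equilibrium value exceeds $p_1-c_1$'' and ``$V>p_1-c_1$'' is that every message, credible or not, yields Expert at most $V$, so the ex ante payoff is bounded above by $V$; you use this implicitly but it is the hinge of the contradiction.
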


\begin{proof}
  See \Cref{app:prp-p-eq-value}.
\end{proof}

\Cref{prp:p-eq-value} implies a \textit{switch of payoff mode} as Expert's credibility changes. 
When $\chi \ge \uchi ( \prior ; \pvec)$, Expert's $\pvec$-equilibrium value is $\cav v (\prior ; \pvec)$,
referred to as the \textit{persuasion mode of payoff.} 
When $\chi < \uchi ( \prior ; \pvec)$, 
Expert's $\pvec$-equilibrium value is $\qcav v (\prior ; \pvec)$,
referred to as the \textit{cheap-talk mode of payoff.} 
\Cref{fig:mode-switch} shows Expert's $\pvec$-equilibrium value 
for $χ \in [0,1]$ and $\prior \in [0, \bq(\pvec)]$.
The boundary between the two payoff modes,
$\uchi ( \prior ; \pvec)$,
is denoted by the dashed curve on the $\prior$--$\chi$ plane.
Note that $\uchi ( \prior ; \pvec)$ is strictly decreasing in $\prior$ over the interval $[0 , \qbar(\pvec) ]$,
with $\uchi ( 0 ; p ) = 1$ and $\uchi ( \qbar(\pvec) ; p ) = 0$ at the two endpoints.
This can be intuitively understood as follows. Using cheap-talk, Expert can only obtain
the cheap-talk mode of payoff $\qcav v (\prior ; \pvec) = p_1 - c_1$.
To achieve the persuasion mode of payoff,
Expert must make Client more likely to purchase the 
higher-margin treatment $a_2$ than she would in the cheap-talk scenario.
A lower $\prior$ indicates that Client is less willing to choose $\as$, demanding a higher 
credibility to facilitate the persuasion.

\begin{figure}
  \centering
  \begin{subfigure}{0.49\textwidth}
  \centering
  \includegraphics[width=\textwidth]{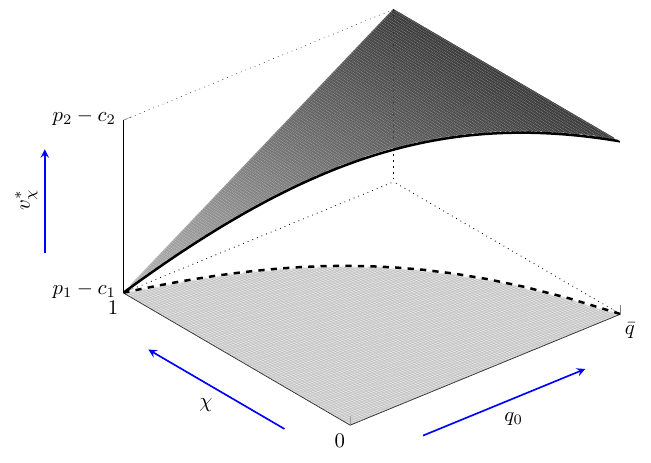}
  \caption{\footnotesize Illustration of $v^*_χ (\prior ; \pvec)$ \label{fig:mode-switch}}
  \end{subfigure}
  \hfill
  \begin{subfigure}{0.49\textwidth}
    \centering
      \includegraphics[width=\textwidth]{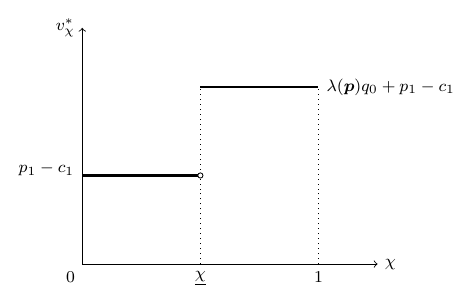}
      \caption{\footnotesize Illustration of $v^*_χ (\prior ; \pvec)$ for a fixed $\prior < \qbar (\pvec)$ \label{fig:trust}}
  \end{subfigure}
  \caption{Expert's $\pvec$-equilibrium value}
\end{figure}

\Cref{fig:mode-switch} also suggests discontinuous jumps of $v_χ^*$ at the dashed curve $\uchi ( \prior ; \pvec)$.
To make it more explicit how credibility affects Expert's $\pvec$-equilibrium value, 
we illustrate $v^*_{\chi} (\prior ; \pvec)$ %
for some fixed $\prior < \qbar (\pvec)$ in
\Cref{fig:trust}, where there is a jump at $\uchi (\prior ; \pvec)$.%
    \footnote{\cite{lipnowski2022} refer to the discontinuity at $χ = \uchi (\prior ; \pvec)$ 
    as \textit{collapse of trust};
    that is, an arbitrarily small reduction in  credibility can cause a large drop in Expert's $\pvec$-equilibrium value.
    They establish the existence of collapse of trust in a more general setting 
    (Lipnowski et al., 2022, Proposition 2, page 2725).}
Note that Expert's $\pvec$-equilibrium value is independent of $χ$ over the interval $[\uchi ( \prior ; \pvec) , 1]$.
One may wonder why Expert will not benefit from a higher credibility given $χ \ge \uchi (\prior ; \pvec)$,
and the reason is as follows. 
When Expert
is fully credible, he chooses a partially-revealing experiment 
that induces the posterior of either $q = 0$ or $q = \qbar(\pvec)$.
As Expert becomes less credible,
he will choose a more informative experiment as a compromise, still 
achieving the payoff of $\cav v (\prior ; \pvec)$
(See \Cref{prp:experiment} for further discussions on Expert's disclousure). Specifically,
at the cutoff $χ = \uchi(\prior; \pvec)$, Expert will choose a fully-revealing experiment.

To sum up, Expert's $\pvec$-equilibrium value for $\pvec \in \Pbar \cup P^2$ can be written as
\begin{equation}\label{eq:p-eq-value}
  v^*_{\chi}(\prior; \pvec) = \begin{cases}
    p_2 - c_2 & 
      \textif \prior > \qbar (\pvec) \\
    \lambda(\pvec) \prior + p_1 - c_1  & 
      \textif \prior \in [\qlbar (χ ; \pvec) , \qbar(\pvec) ]   \\
    p_1 - c_1  & 
      \textif  \prior < \qlbar (χ ; \pvec)
  \end{cases}  
\end{equation}
where $\qlbar (χ ; \pvec)$ is the implicit function determined by
$\uchi (\qlbar ;\pvec) = χ$.%
  \footnote{It can be verified that $\qlbar (χ ; \pvec)$ is strictly decreasing in $\chi$,  
  with $\qlbar (1 ; \pvec) = 0$ and $\qlbar (0 ; \pvec) = \qbar (\pvec)$ at the endpoints.
  }
Specifically, when $\pvec \in \Pbar$,
we have $\lambda (\pvec) = 0$ and the expression \eqref{eq:p-eq-value} reduces to
$v^*_{\chi}(\prior; \pvec) = p_2 - c_2$ for all $χ \in [0,1]$ and $\prior \in (0 , 1)$.

\subsection{Expert's disclosure in $\pvec$-equilibria}

We turn to characterize Expert's chosen experiment and signaling strategy in Expert-optimal $\pvec$-equilibria.
To introduce some notations,
refer to $(q, \s) \in [0,1] \times \RR$ as an \textit{outcome of $\cG_{\pvec}$},
where $q$ is Client's posterior and $\s$ is Expert's ex ante payoff.  
Denote by $\PP \in \Delta ([0,1] \times \RR)$ the ex ante joint distribution of Client's posterior and Expert's ex ante payoff in the subgame $\cG_{\pvec}$.
We refer to $\PP$ as an \textit{outcome distribution of $\cG_{\pvec}$}.
Furthermore, an outcome distribution of $\cG_{\pvec}$ is called a \textit{$\pvec$-equilibrium outcome distribution} if it is induced by some $\pvec$-equilibrium.%
  \footnote{Let $\supp \PP \equiv \set{(q_1, e_1) \dots, (q_K, e_K)}$ for some $K \ge 1$. Then $\PP$ is a $\pvec$-equilibrium outcome distribution only if $ \sum_{i=1}^K q_i \PP (q_i, e_i) = \prior$ and
  $e_i \in V(q_i ; \pvec)$ for $i \in \set{1, \dots, K}$. 
  }
A $\pvec$-equilibrium outcome distribution is called an \textit{Expert-optimal $\pvec$-equilibrium outcome distribution} if it is induced by some Expert-optimal $\pvec$-equilibrium.

When $v^*_{\chi} (\prior; \pvec) = \qc v(\prior; \pvec)$, Expert can achieve his $\pvec$-equilibrium value through cheap talk by setting $ξ = σ$.
In this case, there generally exist multiple Expert-optimal $\pvec$-equilibrium outcome distributions.
Suppose $v^*_{\chi} (\prior; \pvec) > \qc v(\prior; \pvec)$.
Then we have $v^*_{\chi} (\prior; \pvec) = \cav  v(\prior; \pvec)$ and 
$\prior \in  [\qlbar (χ ; \pvec) , \qbar(\pvec) )$.
In this case, there is a unique Expert-optimal $\pvec$-equilibrium outcome distribution.
\Cref{prp:experiment} characterizes the $\pvec$-equilibrium outcome distribution and Expert's information disclosure 
in this case.

\begin{proposition}\label{prp:experiment}
If $v^*_{\chi} (\prior; \pvec) > \qc v(\prior; \pvec)$, then the following claims hold.
\begin{enumerate}[(i)]
\item 
  There exists a unique Expert-optimal $\pvec$-equilibrium outcome distribution.
  The support of this outcome distribution
  is $\set{(q_1^*, \s_1^*), (q_2^*, \s_2^*)}$ 
  where $q_1^* = 0$, $\s_1^* = p_1 - c_1$, $q_2^* = \qbar(\pvec)$ and $\s_2^* = p_2 - c_2$.

\item\label{itm:experiment} 
  Expert chooses the experiment $ξ^*$ satisfying
  $$
  ξ^*(m_1 \mid t_1) = \frac{\uchi (\prior ; \pvec) }{χ},
  ξ^*(m_2 \mid t_1) = 1 - \frac{\uchi (\prior ; \pvec) }{χ}
  \text{ and } ξ^*(m_2 \mid t_2) = 1
  $$
  for two distinct messages $m_1$ and $m_2$, and his signalling strategy is
  $\sigma^{\NR} (m_2 \mid t) = 1$ for all $t \in T$.

\end{enumerate}
\end{proposition}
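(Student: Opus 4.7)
My plan is as follows. By Proposition \ref{prp:p-eq-value} and \eqref{eq:p-eq-value}, the hypothesis $v^*_{\chi}(\prior;\pvec) > \qc v(\prior;\pvec)$ forces us into the persuasion regime: $\pvec \in P^2$, $\prior \in [\qlbar(\chi;\pvec), \qbar(\pvec))$, $\chi \ge \uchi(\prior;\pvec)$, and the $\pvec$-equilibrium value equals $\cav v(\prior;\pvec) = \lambda(\pvec)\prior + p_1 - c_1$.

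For part (i), I would characterize the posterior marginal $\mu$ of any Expert-optimal $\pvec$-equilibrium outcome distribution $\mathbb{P}$. At every $(q,e)$ in the support of $\mathbb{P}$ we have $e \le v(q;\pvec)$, and Bayes plausibility gives $\int q \, d\mu(q) = \prior$. Combining these with $v \le \cav v$ and Jensen's inequality applied to the concave function $\cav v(\cdot;\pvec)$ yields
\[
\cav v(\prior;\pvec) = \int e \, d\mathbb{P} \le \int v(q;\pvec)\, d\mu(q) \le \int \cav v(q;\pvec)\, d\mu(q) \le \cav v(\prior;\pvec),
\]
so equality must hold in each link. Equality in the second inequality forces $\mu$ to be supported on $\{q : v(q;\pvec) = \cav v(q;\pvec)\}$. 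Because $v(\cdot;\pvec)$ is constant at $p_1 - c_1$ on $(0, \qbar(\pvec))$ while $\cav v(\cdot;\pvec)$ strictly ascends from $p_1 - c_1$ to $p_2 - c_2$ across this interval, this locus is $\{0\} \cup [\qbar(\pvec), 1]$. The mean constraint $\mathbb{E}_\mu[q] = \prior < \qbar(\pvec)$, together with the fact that any subdistribution on $[\qbar(\pvec), 1]$ must have conditional mean $\qbar(\pvec)$ to balance the total, pins $\mu$ to a two-point distribution on $\{0, \qbar(\pvec)\}$ with weights $1 - \prior/\qbar(\pvec)$ and $\prior/\qbar(\pvec)$. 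Equality in the first inequality then forces $e = v(q;\pvec)$ $\mathbb{P}$-almost surely, giving $e_1^* = p_1 - c_1$ at $q_1^* = 0$ and $e_2^* = p_2 - c_2$ at $q_2^* = \qbar(\pvec)$.

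For part (ii), I would reverse-engineer the disclosure policy that induces this unique outcome distribution. Inducing posterior $\qbar(\pvec)$ yields the non-credible Expert a payoff of $p_2 - c_2$, while inducing posterior $0$ yields $p_1 - c_1$; since $\pvec \in P^2$ makes the former strictly larger, the non-credible Expert optimally pools on $m_2$, giving $\sigma^{\NR}(m_2 \mid t) = 1$ for each $t$. Consequently, $m_1$ arises only via the experiment, and requiring its induced posterior to be $0$ forces $\xi^*(m_1 \mid t_2) = 0$, i.e., $\xi^*(m_2 \mid t_2) = 1$. The last free parameter $x := \xi^*(m_2 \mid t_1)$ is pinned down by Bayes's rule at $m_2$:
\[
\qbar(\pvec) = \frac{\prior}{\prior + (1-\prior)[\chi x + (1-\chi)]},
\]
and direct computation yields $x = 1 - \uchi(\prior;\pvec)/\chi$, which lies in $[0,1]$ precisely when $\chi \ge \uchi(\prior;\pvec)$. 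I would close by verifying that this profile is indeed a $\pvec$-equilibrium: Client's optimal actions at posteriors $0$ and $\qbar(\pvec)$ are $a_1$ and $a_2$ respectively, and the non-credible Expert's pooling incentive was already checked.

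The main obstacle is the uniqueness argument in part (i), specifically the step that $\mu$'s support lies in $\{0, \qbar(\pvec)\}$. The delicate point is extracting simultaneous equality from every link in the chain of inequalities, which hinges on the strict gap $v(\cdot;\pvec) < \cav v(\cdot;\pvec)$ on the open interval $(0, \qbar(\pvec))$ together with flatness of $\cav v$ on $[\qbar(\pvec),1]$. Once uniqueness of $\mu$ is established, part (ii) reduces to Bayes's-rule bookkeeping.
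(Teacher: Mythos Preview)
Your proposal is correct and follows the paper's approach: both argue that attaining $\cav v(\prior;\pvec)$ forces the posterior marginal to be the two-point law on $\{0,\qbar(\pvec)\}$ (the paper simply asserts this, while you spell out the Jensen chain), and then recover $\xi^*$ and $\sigma^{\NR}$ from Bayes' rule at the message inducing posterior $\qbar(\pvec)$. One sentence in part~(i) should be sharpened: the mean constraint alone does not force the conditional mean on $[\qbar(\pvec),1]$ to equal $\qbar(\pvec)$; use equality in your third (Jensen) link first to fix the total mass $\prior/\qbar(\pvec)$ on $[\qbar(\pvec),1]$, after which Bayes plausibility pins that conditional mean to $\qbar(\pvec)$.
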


\begin{proof}
  See \Cref{app:prp-experiment}.
\end{proof}

By \Cref{prp:experiment},
when $v^*_{\chi} (\prior; \pvec) > \qc v(\prior; \pvec)$,
a less credible Expert will choose a more informative experiment in that 
$ξ^*(m_2 \mid t_2)$ is independent of $χ$ and that
$ξ^*(m_1 \mid t_1) = \frac{\uchi (\prior ; \pvec) }{χ}$ is decreasing in $χ$.
Specifically, $ξ^*(m_1 \mid t_1) = ξ^*(m_2 \mid t_2) = 1$ when $χ = \uchi (\prior ; \pvec)$, 
meaning that Expert chooses a fully-revealing experiment.
However, a less credible Expert does not disclose more information to Client overall,
as the induced $\pvec$-equilibirum outcome distribution is independent of $χ$.
This is because a less credible Expert has a higher probability of manipulating the experiment result, 
offsetting the effect of a more informative experiment.

\subsection{Expert's equilibrium value}
Call $\pvec$ an \textit{optimal price list} if it solves $\max_{\pvec \in \Pbar \cup P^2} v^*_χ (\prior ; \pvec)$.
\Cref{prp:main} characterizes the optimal price list(s) and Expert's equilibrium value $\ev^*_χ(\prior)$.

\begin{proposition}\label{prp:main}
  Let $χ^*(\prior) \equiv \frac{(\cs - \cm) - \prior (l_2 - l_1)}{(1 - \prior)(\cs - \cm)}$ for $\prior \in (0, 1)$.
  \begin{enumerate}[(i)]
  \item
    If $\prior \le \frac{c_2 - c_1}{l_2 - l_1}$ and $χ \le  χ^*(\prior)$,
    then any price list $\pvec^*$ satisfying $p_1^* = l_1$ and $p_2^* \ge l_1 - c_1 + c_2$ is optimal, and
    $\ev^*_{χ} (\prior) = \qcav v(\prior ; p^*) = l_1 - c_1$. 
  \item\label{itm:unique-op-p}
    Otherwise (i.e., if $χ > \max \{χ^*(\prior) , 0\}$), there exists a unique optimal price list $\pvec^* = (l_1, p_2^* (\prior, χ))$, where
    $p_2^* (\prior, χ)$ is the implicit function determined by 
    $$\uchi (\prior; l_1, p_2^*) = χ,$$
    and Expert's equilibrium value is
    $\ev^*_{χ} (\prior) = \cav v(\prior ; \pvec^*) =  [(l_2 - l_1) - \chi (c_2 - c_1)]\prior + (l_1 - c_1) - (1 - \chi) (c_2 - c_1)$. 
  \end{enumerate}  
\end{proposition}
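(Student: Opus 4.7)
The plan is to use \Cref{lmm:1} to restrict attention to $\pvec \in \Pbar \cup P^2$ and then solve $\max_\pvec v^*_\chi(\prior;\pvec)$ in two stages: first reduce to $p_1 = l_1$, then optimize over $p_2$, and finally compare the best persuasion-mode value against the cheap-talk floor $l_1 - c_1$.

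For the reduction to $p_1 = l_1$, I would show that $v^*_\chi(\prior;\pvec)$ is non-decreasing in $p_1$ for every fixed $p_2$. By \Cref{prp:p-eq-value}, the value equals either $\cav v$, $p_1 - c_1$, or $p_2 - c_2$ according to the mode. Under the first case of \eqref{eq:qbar}, both $\qbar = (p_2 - l_1)/(l_2 - l_1)$ and $\uchi$ are independent of $p_1$; writing $\cav v(\prior;\pvec) = (p_1 - c_1)(1 - \prior/\qbar) + (p_2 - c_2)(\prior/\qbar)$ together with $p_2 - c_2 \ge p_1 - c_1$ makes monotonicity in $p_1$ immediate. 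Under the second case, raising $p_1$ strictly lowers $\qbar$ and $\uchi$, enlarging the persuasion region and rendering any cheap-talk-to-persuasion transition an upward jump; within each mode the payoff is itself non-decreasing in $p_1$. Thus I may take $\pvec = (l_1, p_2)$ with $p_2 \in [l_1 - c_1 + c_2,\, l_2]$.

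With $p_1 = l_1$, the first case of \eqref{eq:qbar} holds throughout, and \eqref{eq:p-eq-value} gives the piecewise expression $v^*_\chi(\prior; l_1, p_2) = p_2 - c_2$ on $[l_1 - c_1 + c_2,\, l_1 + (l_2 - l_1)\prior]$, then $\cav v(\prior; l_1, p_2)$ on the persuasion-mode interval, and then $l_1 - c_1$ beyond $p_2^*(\prior, \chi)$. Since $d\lambda/dp_2 = (l_2 - l_1)(c_2 - c_1)/(p_2 - l_1)^2 > 0$, $\cav v$ is strictly increasing on the middle piece, so the overall maximum is attained either at $p_2 = p_2^*(\prior, \chi)$ or on the final piece where the value is $l_1 - c_1$. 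Solving $\uchi(\prior; l_1, p_2^*) = \chi$ yields $p_2^*(\prior, \chi) = l_1 + \frac{(l_2 - l_1)\prior}{1 - \chi(1 - \prior)}$, and substituting and simplifying gives $\cav v(\prior; l_1, p_2^*) = [(l_2 - l_1) - \chi(c_2 - c_1)]\prior + (l_1 - c_1) - (1 - \chi)(c_2 - c_1)$.

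Comparing this with $l_1 - c_1$ reduces, after rearrangement, to the condition $\chi \ge \chi^*(\prior)$, where $\chi^*(\prior) \le 0$ if and only if $\prior \ge (c_2 - c_1)/(l_2 - l_1)$. In case (ii), $\chi > \max\{\chi^*(\prior), 0\}$ forces $\cav v(\prior; l_1, p_2^*) > l_1 - c_1$, and the strict monotonicities in both $p_1$ and $p_2$ over the persuasion piece make $\pvec^* = (l_1, p_2^*(\prior, \chi))$ the unique optimum with the stated value. In case (i), the identity $\uchi(\prior; l_1, l_1 - c_1 + c_2) = \chi^*(\prior)$ combined with monotonicity of $\uchi$ in $p_2$ implies every $(l_1, p_2)$ with $p_2 \ge l_1 - c_1 + c_2$ yields exactly $l_1 - c_1$---attained in $\Pbar$ at the left endpoint and in cheap-talk mode thereafter since $\uchi > \chi^*(\prior) \ge \chi$ for $p_2 > l_1 - c_1 + c_2$. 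The main obstacle is the reduction step: under the second case of $\qbar$, $p_1$ enters both $\qbar$ and $\uchi$, so one must carefully track mode transitions and verify that no downward jumps in $v^*_\chi$ occur as $p_1$ increases.
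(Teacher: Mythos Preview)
Your overall architecture matches the paper's proof: restrict to $\Pbar \cup P^2$ via \Cref{lmm:1}, reduce to $p_1=l_1$, then optimize over $p_2$ using the piecewise formula \eqref{eq:p-eq-value}, and finally compare $\cav v(\prior;l_1,p_2^*)$ against the cheap-talk floor $l_1-c_1$. Your computations for $p_2^*(\prior,\chi)$, the resulting equilibrium value, and the identification of the cutoff via $\uchi(\prior;l_1,l_1-c_1+c_2)=\chi^*(\prior)$ are all correct.

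The one substantive difference is how you handle the second branch of \eqref{eq:qbar}. You propose showing monotonicity in $p_1$ directly for that branch, and you correctly flag this as the delicate step. It does work: in persuasion mode one computes $\partial_{p_1}\cav v(\prior;\pvec)=1-\tfrac{\prior l_2(c_2-c_1)}{(p_2-p_1)^2}$, which is positive because $p_2-p_1>c_2-c_1$ on $P^2$ and $p_2-p_1>\prior l_2$ whenever $\prior<\qbar(\pvec)$; the mode transitions are then monotone as you describe. But you have not actually carried this out, and one must also watch the constraint $\pvec\in P^2$ as $p_1$ rises. The paper sidesteps the whole issue with a cleaner perturbation: when $p_2<l_2-\tfrac{l_2-l_1}{l_1}p_1$ it increases \emph{both} $p_1$ and $p_2$ by the same $\varepsilon>0$, which leaves $\qbar(\pvec)=(p_2-p_1)/l_2$ and $\qlbar(\chi;\pvec)$ unchanged while shifting every piece of \eqref{eq:p-eq-value} up by $\varepsilon$. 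This immediately rules out the second branch without any derivative computation or mode tracking, and is worth adopting. After that, the paper's reduction to $p_1=l_1$ (its Claim~\ref{clm:xiong}) and the $p_2$-optimization coincide with yours.
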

\begin{proof}
  See \Cref{app:prp-main}.
\end{proof}

\begin{figure}
  \begin{subfigure}{0.49\textwidth}
    \centering
    \includegraphics[width=\textwidth]{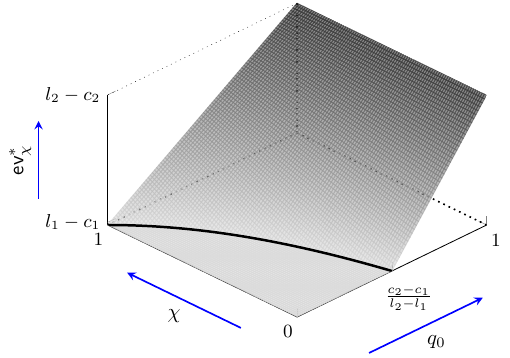}
    \caption{Illustration of $\ev_χ^*(\prior)$} 
    \label{fig:ev}
  \end{subfigure}
  \hfill
  \centering
  \begin{subfigure}{0.45\textwidth}
    \centering
    \includegraphics[]{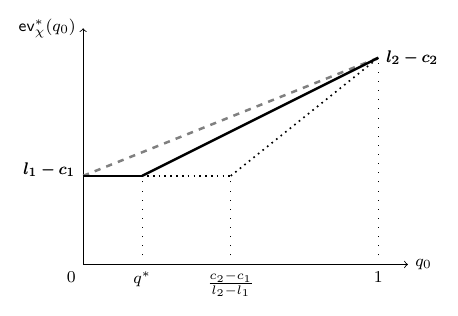}
    \caption{Illustration of $\ev_χ^*(\prior)$ for a fixed $χ$}
    \label{fig:profit-intermediate-chi}
  \end{subfigure}
  \caption{Expert's equilibrium value}
\end{figure}

\Cref{fig:ev} visualizes Expert's equilibrium value $\ev^*_χ(\prior)$
for $χ \in [0,1]$ and $\prior \in [0,1]$.
Unlike $v_χ^*(\prior; \pvec)$, function $\ev^*_χ(\prior)$ is continuous over its entire domain.
Specifically, there is no jump at $χ^*(\prior)$ for $\prior \le \frac{c_2 - c_1}{l_2 - l_1}$, 
denoted by the solid curve on the $\prior$--$χ$ plane.
The reason is that under the pricing scheme $\pvec^*$ as described in
\Cref{prp:main}\ref{itm:unique-op-p},
$\cav v(\prior ; l_1, p_2^* (\prior, χ))$ converges to $l_1 - c_1$ as $χ \searrow χ^*(\prior)$ 
for any $\prior \le \frac{c_2 - c_1}{l_2 - l_1}$.
We refer readers to \Cref{sec:sketch} for explanations on the optimality of the described pricing scheme.

Say that Expert \textit{benefits from credibility} %
if $\ev^*_{χ} (\prior) > \ev^*_{0} (\prior)$.
That is, Expert benefits from credibility if his equilibrium value is strictly higher 
than the benchmark case without credibility.
\Cref{prp:main} implies that
Expert benefits from credibility if and only if $χ > \max \{χ^*(\prior) , 0 \}$.
When $\prior \ge \frac{c_2-c_1}{l_2-l_1}$, $χ^*(\prior) \le 0$ and
Expert benefits from credibility for any positive $χ$.
To make it explicit when (and how much) Expert benefits from credibility,
\Cref{fig:profit-intermediate-chi} illustrates $\ev_χ^*$ for some fixed $χ \in (0,1)$ (the solid curve), for $χ = 1$ (the dashed line) and for $χ = 0$ (the dotted curve),
where the kink $q^*$ is determined by $χ^*(q^*) = χ$.
As shown in \Cref{fig:profit-intermediate-chi}, Expert benefits from credibility 
if and only if $\prior > q^*$, with the benefit $\ev^*_{χ} (\prior) - \ev^*_{0} (\prior)$
maximized at $\prior = \frac{c_2 - c_1}{l_2 - l_1}$.
Notably, the kink $q^*$ approaches 0 (resp.\ $\frac{c_2 - c_1}{l_2 - l_1}$) as $\chi$ approaches $1$  (resp.\ $0$).
Moreover, fixing any $\prior \in (0,1)$, we have
$\lim_{χ \nearrow 1} \ev_χ (\prior) = \ev_1 (\prior)$ and
$\lim_{χ \searrow 0} \ev_χ (\prior) = \ev_0 (\prior)$.
Therefore, regarding Expert's equilibrium value, 
the limiting scenarios of limited credibility as $χ \nearrow 1$ and $χ \searrow 0$
coincide with the extreme cases of $χ=1$ and $χ=0$, respectively.

Expert's information disclosure in equilibrium follows directly from \Cref{prp:experiment,prp:main}.
Specifically, he chooses a fully-revealing experiment whenever $χ > \max\{ χ^*(\prior) , 0 \}$. 

\begin{corollary}\label{cor:signalling}
When the pair $(\prior, χ)$ satisfies $χ > \max\{ χ^*(\prior) , 0 \}$, 
Expert chooses a fully-revealing experiment $ξ^{\FR}$ satisfying  
$ξ^{\FR} (m_1 \mid t_1) = 1$ and $ξ^{\FR} (m_2 \mid t_2) = 1$ 
for two distinct messages $m_1 \text{ and } m_2$,
and his signalling strategy is $\sigma^{\NR} (m_2 \mid t) = 1$ for all $t \in T$.
\end{corollary}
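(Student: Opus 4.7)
The plan is to read the corollary as a direct consequence of \Cref{prp:main}\ref{itm:unique-op-p} combined with \Cref{prp:experiment}. By \Cref{prp:main}\ref{itm:unique-op-p}, the hypothesis $\chi > \max\{\chi^*(\prior),0\}$ pins down a unique optimal price list $\pvec^* = (l_1, p_2^*(\prior, \chi)) \in P^2$, defined implicitly by $\uchi(\prior; \pvec^*) = \chi$. At this price, Expert's $\pvec^*$-equilibrium value equals $\cav v(\prior; \pvec^*)$, which I then feed into \Cref{prp:experiment} to extract the experiment and signaling strategy.

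The key calculation is essentially substitution. \Cref{prp:experiment}\ref{itm:experiment} says that whenever $v^*_\chi(\prior; \pvec) > \qc v(\prior; \pvec)$, the Expert-optimal experiment places $\xi^*(m_1 \mid t_1) = \uchi(\prior; \pvec)/\chi$ on $m_1$ conditional on $t_1$, with $\xi^*(m_2 \mid t_2) = 1$ and $\sigma^{\NR}(m_2 \mid t) = 1$ for all $t$. At the specific price list $\pvec^*$ chosen in equilibrium, the defining relation $\uchi(\prior; \pvec^*) = \chi$ immediately collapses the ratio $\uchi(\prior; \pvec^*)/\chi$ to $1$. Hence $\xi^*(m_1 \mid t_1) = 1$ and $\xi^*(m_2 \mid t_2) = 1$, which is the fully-revealing experiment $\xi^{\FR}$ asserted in the statement, and the signaling strategy $\sigma^{\NR}(m_2 \mid t) = 1$ carries over verbatim.

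The only thing to check carefully before invoking \Cref{prp:experiment} is that we are indeed in the regime $v^*_\chi(\prior; \pvec^*) > \qc v(\prior; \pvec^*)$, so that \Cref{prp:experiment} applies rather than the multi-equilibrium cheap-talk case. This follows from \Cref{prp:main}\ref{itm:unique-op-p}: Expert's equilibrium value is $[(l_2 - l_1) - \chi(c_2 - c_1)]\prior + (l_1 - c_1) - (1 - \chi)(c_2 - c_1)$, and a direct comparison with $\qc v(\prior; \pvec^*) = l_1 - c_1$ gives a strict inequality precisely when $\chi > \chi^*(\prior)$, which is our hypothesis. This also implicitly guarantees $\prior < \qbar(\pvec^*)$, so that $\pvec^*$ lies in the relevant region where \eqref{eq:p-eq-value} delivers $\cav v(\prior; \pvec^*)$ rather than $p_2^* - c_2$.

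No step is really an obstacle here; the only subtlety is bookkeeping, namely checking that the hypothesis on $(\prior, \chi)$ places us squarely in case (ii) of \Cref{prp:main} and in the ``persuasion mode'' of \Cref{prp:p-eq-value}, so that \Cref{prp:experiment} can be applied at the boundary $\chi = \uchi(\prior; \pvec^*)$ and returns the fully-revealing experiment.
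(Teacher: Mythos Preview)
Your proposal is correct and follows exactly the approach the paper indicates: the corollary is stated as an immediate consequence of \Cref{prp:experiment} and \Cref{prp:main}, and you have correctly unpacked the key substitution, namely that at the unique optimal price list $\pvec^*$ of \Cref{prp:main}\ref{itm:unique-op-p} the defining relation $\uchi(\prior;\pvec^*)=\chi$ collapses the ratio in \Cref{prp:experiment}\ref{itm:experiment} to $1$, yielding $\xi^{\FR}$. Your verification that $v_\chi^*(\prior;\pvec^*) > \qc v(\prior;\pvec^*)$ (so that \Cref{prp:experiment} applies) and that $\pvec^* \in P^2$ with $\prior < \qbar(\pvec^*)$ is also correct.
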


\subsubsection{Optimality of $\pvec^*$\label{sec:sketch}}

We sketch the proof of the optimality of $\pvec^* = (l_1, p_2^* (\prior, χ))$, as specified in \Cref{prp:main}\ref{itm:unique-op-p}.
If $p_2 - c_2 = p_1- c_1$, then it is optimal for Expert to set $p_1 = l_1$ and $p_2 = l_1- c_1 + c_2$ and
his highest ex ante payoff is $l_1 - c_1$.
Now suppose $p_2 - c_2 > p_1- c_1$.
\Cref{fig:p_value} illustrates Expert's $\pvec$-equilibrium value as the prior $\prior$ changes,
where $\qbar^* = \qbar (\pvec)$ and $\qlbar^* = \qlbar (χ; \pvec)$.
While there are two possible cases for computing the kink $\qbar$ as in \Cref{eq:qbar},
the second case, $\qbar = \frac{p_2 - p_1}{l_2}$, is never optimal.
Otherwise, consider Expert increasing both $p_1$ and $p_2$ by some arbitrarily small $ε > 0$.
Then $\qbar (\pvec)$ and $\qlbar (χ ; \pvec)$ remain unchanged and, by \Cref{fig:p_value}, $v^*_{\chi}(\prior ; \pvec)$ strictly increases for all $\prior$.

\begin{figure}
  \centering
  \begin{subfigure}{0.52\textwidth}
      \centering
      \includegraphics[]{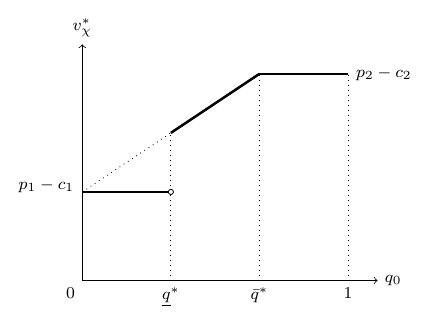}
      \caption{$v_χ^*(\prior ; \pvec)$ for fixed $χ$ and $\pvec$ \label{fig:p_value}}
  \end{subfigure}
  \hfill
  \begin{subfigure}{0.47\textwidth}
      \centering
      \includegraphics{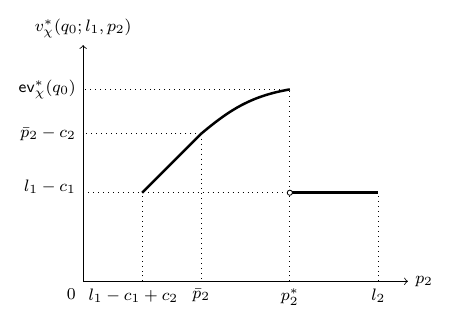}  
      \caption{Optimal $p_2$ for Expert\label{fig:optimal_p2}}      
  \end{subfigure}
  \caption{Illustration of the optimality of $\pvec^*$ when $χ > \max \{χ^* (\prior) , 0 \}$}
\end{figure}

From now on, suppose $p_2 \ge l_2 - \frac{l_2 - l_1}{l_1} p_1$ and then
$\qbar (\pvec)= \frac{\ps - \lm}{\ls - \lm}$.
Since $\qbar(\pvec)$ and $\qlbar(χ ; \pvec)$ are both independent of $p_1$, 
it follows from standard perturbation arguments that
$p_1^* = l_1$ is Expert-optimal.\footnote{See \Cref{app:prp-main} for details.}
Given $p_1 = l_1$, \Cref{fig:optimal_p2} visualizes Expert's $\pvec$-equilibrium value for $p_2 \in [l_1 - c_1 + c_2, l_2]$,
where $\bar{p}_2$ is determined by $\qbar (l_1, \bar{p}_2) = \prior$.
When $p_2 > p_2^*$, we have $χ < \uchi (\prior ; l_1, p_2^*)$ and Expert's payoff is $l_1 - c_1$. 
When $p_2 \in [\bar{p}_2 , p_2^* ]$, Expert's $\pvec$-equilibrium value is $λ(\pvec) \prior + l_1 - c_1$, 
corresponding to the sloping line segment in \Cref{fig:p_value}.
When $p_2 < \bar{p}_2$, Expert's $\pvec$-equilibrium value is $p_2 - c_2$, 
corresponding to the flat line in \Cref{fig:p_value}.
It follows that $v^*_{\chi}( \prior ; l_1, p_2)$ is maximized at $p_2 = p_2^*$.

\section{Welfare analysis}
In this section, we analyze how different credibility levels affect players' welfare.
We also examine the welfare implications of policies regarding the monitoring of expert's disclosure and price regulation.

\subsection{Comparative statics}
The effects of credibility on Expert's equilibrium value follow directly from \Cref{prp:main}.

\begin{corollary} \label{cor:chi}
    Fixing $\prior \in (0, \frac{c_2 - c_1}{l_2 - l_1})$, Expert's equilibrium value
    is independent of $χ$ over $[0 , χ^*(\prior)]$ and is strictly increasing in $χ$ 
    over $[χ^*(\prior) , 1]$;
    fixing $\prior \in [\frac{c_2 - c_1}{l_2 - l_1}, 1)$, Expert's equilibrium value is strictly increasing in $χ$ over $[0,1]$. 
\end{corollary}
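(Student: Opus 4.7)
The plan is to obtain both claims by directly analyzing the closed-form expression for $\ev^*_χ(\prior)$ provided in Proposition~\ref{prp:main}. Since that proposition gives $\ev^*_χ(\prior)$ piecewise in $χ$ for each fixed $\prior$, each monotonicity claim reduces to inspecting an affine-in-$χ$ formula on each piece.

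First, I would fix $\prior \in (0, \frac{c_2 - c_1}{l_2 - l_1})$ and split into two sub-intervals. On $[0, χ^*(\prior)]$, case (i) of Proposition~\ref{prp:main} applies and $\ev^*_χ(\prior) = l_1 - c_1$, which is plainly independent of $χ$. On $(χ^*(\prior), 1]$, case (ii) applies and the closed form given there can be rearranged as
\[
\ev^*_χ(\prior) = \prior(l_2 - l_1) + (l_1 - c_1) - (c_2 - c_1) + χ(c_2 - c_1)(1 - \prior).
\]
Differentiating in $χ$ yields $(c_2 - c_1)(1 - \prior) > 0$ (using $c_2 > c_1$ and $\prior < 1$), so $\ev^*_χ(\prior)$ is strictly increasing on $(χ^*(\prior), 1]$. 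To extend strict monotonicity to the closed interval $[χ^*(\prior), 1]$, I would substitute $χ = χ^*(\prior)$ into the case-(ii) formula and verify, using the definition of $χ^*$, that the value equals $l_1 - c_1$, matching the case-(i) value and giving continuity at the seam.

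Next, I would fix $\prior \in [\frac{c_2 - c_1}{l_2 - l_1}, 1)$. Here the numerator $(c_2 - c_1) - \prior(l_2 - l_1)$ of $χ^*(\prior)$ is nonpositive, so $χ^*(\prior) \le 0$ and case (ii) of Proposition~\ref{prp:main} applies for all $χ \in (0, 1]$. The same derivative computation then delivers $\partial_χ \ev^*_χ(\prior) = (c_2 - c_1)(1 - \prior) > 0$, which proves strict monotonicity on $(0, 1]$; the extension to $χ = 0$ follows from the limit $\lim_{χ \searrow 0} \ev^*_χ(\prior) = \ev^*_0(\prior)$ already noted in the discussion surrounding Proposition~\ref{prp:main}.

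No step here is a serious obstacle: the corollary is essentially a direct consequence of Proposition~\ref{prp:main}. The only matter requiring any care is handling the seam $χ = χ^*(\prior)$ in the first claim, and that is immediate from the observation that the case-(i) and case-(ii) expressions agree at that point.
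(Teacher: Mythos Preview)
Your proposal is correct and matches the paper's approach: the paper offers no separate proof of this corollary, stating only that it follows directly from \Cref{prp:main}, which is exactly what you do by reading off the piecewise formula and differentiating in $\chi$. The seam check at $\chi = \chi^*(\prior)$ and the $\chi \searrow 0$ limit are handled correctly.
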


We turn to Client's welfare and investigate how it is affected by credibility.
Fixing any pair $(\prior, χ)$, all equilibria yield the same payoff for Expert.
However, Client's welfare can vary among the equilibria.
We use the value of Expert's services for Client to measure Client's welfare:
Define \emph{Client's equilibrium value} as the expected surplus she obtains in equilibrium.
This value can be calculated as the difference between Client's equilibrium payoff and her ex ante payoff from choosing $\an$.
Since the total surplus is $S(\prior) \equiv \prior (l_2 - c_2) + (1- \prior) (l_1 - c_1)$,
Client's equilibrium value cannot exceed $S(\prior) - \ev^*_χ (\prior)$.
When $χ > \max \{χ^*(\prior) , 0\}$, there exists a unique optimal price list $\pvec^*$ and a unique Expert-optimal $\pvec^*$-equilibrium outcome distribution,
in which Client's equilibrium value is always zero.
When $\prior \le \frac{c_2 - c_1}{l_2 - l_1}$ and $χ \le χ^*(\prior)$, multiple optimal price lists exist, and Client's equilibrium values vary across equilibria.
In such instances, Client's equilibrium value can take any value between $0$ and $S(\prior) - \ev_χ^* (\prior)$.

\begin{proposition}\label{prp:client-values}
  If $\prior \in (0, \frac{c_2 - c_1}{l_2 - l_1} ]$ and $χ \in (0, χ^*(\prior)]$, then the set of Client's equilibrium values is 
  $\big[0 , S(\prior) - \ev_χ^* (\prior) \big]$.
\end{proposition}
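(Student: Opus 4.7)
The plan is to establish the two endpoints as tight bounds and then fill the interior with a one-parameter family of equilibria at a fixed price.

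For the bounds, the argument is accounting. In any equilibrium outcome the sum of Client's and Expert's gains over their respective no-treatment baselines equals the realized social surplus, which cannot exceed $S(\prior)$. By \Cref{prp:main}, every Expert-preferred equilibrium yields Expert exactly $\ev^*_\chi(\prior) = \lm - \cm$, so Client's value is at most $S(\prior) - (\lm - \cm) = \prior[(\ls-\cs)-(\lm-\cm)]$. The lower bound $0$ is immediate because $\an$ is always available to Client, guaranteeing her the no-treatment payoff.

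For attainability, I work at the equal-margin optimal price $\pvec^* = (\lm, \lm - \cm + \cs)$, where Expert's margin on both treatments equals $\lm - \cm$, so the non-credible Expert is indifferent between recommending either treatment. The construction is a family of $\pvec^*$-equilibria indexed by a single parameter $\alpha \in [\alpha_{\min}(\chi), 1]$: the experiment $\xi$ is fully separating; in the non-credible state, type $\sm$ sends $m_1$ with probability $\alpha$ and $m_2$ otherwise, while type $\ss$ always sends $m_2$; Client purchases $\am$ after $m_1$ and $\as$ after $m_2$. The threshold $\alpha_{\min}(\chi)$ is chosen so that the Bayes posterior at $m_2$, namely $\prior/[(1-\prior)(1-\chi)(1-\alpha) + \prior]$, equals the Client-indifference cutoff $q^* = (\cs - \cm)/(\ls - \lm)$ at $\alpha = \alpha_{\min}(\chi)$; the hypothesis $\chi \le \chi^*(\prior)$ enters precisely as the condition $\alpha_{\min}(\chi) \le 1$. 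Once the equilibrium is verified, a direct expectation shows that Client's value is continuous and strictly increasing in $\alpha$, equal to $0$ at $\alpha = \alpha_{\min}(\chi)$ and to $\prior[(\ls-\cs)-(\lm-\cm)]$ at $\alpha = 1$; the intermediate value theorem then covers the full interval.

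The main obstacle is verifying the equilibrium conditions uniformly across $\alpha \in [\alpha_{\min}(\chi), 1]$. The key checks are that equal-margin pricing combined with Client purchasing $\as$ with probability one at $m_2$ makes the non-credible Expert's mixed signaling incentive compatible; that Client's posterior at $m_2$ weakly exceeds $q^*$, so $\as$ is a best response; and that at posterior $0$ after $m_1$ Client can be selected to break her tie in favor of $\am$, consistently with the Expert-preferred refinement. The boundary case $\alpha = \alpha_{\min}(\chi)$ is the most delicate: Client sits exactly at $q^*$ after $m_2$, and her tie-breaking there is precisely what drives her value down to $0$.
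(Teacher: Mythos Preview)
Your approach is correct and reaches the result, but it differs from the paper's construction. Both arguments work at the equal-margin price $\pvec^* = (\lm, \lm - \cm + \cs)$ and sweep out the same one-parameter family of outcome distributions supported on posteriors $\{0, q_2\}$ with $q_2 \in [q^*, 1]$, where $q^* = (\cs-\cm)/(\ls-\lm) = \qbar(\pvec^*)$. The paper gets there by setting $\sigma = \xi$ and varying the \emph{experiment} itself so that the induced posterior at $m_2$ is $q_2$; because credible and non-credible types then send identically, $\chi$ never enters the construction, and the hypothesis $\chi \le \chi^*(\prior)$ is needed only to guarantee that $\pvec^*$ is a globally optimal price. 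You instead fix a fully-revealing experiment and vary the non-credible Expert's mixing probability $\alpha$, which is closer in spirit to the limited-credibility mechanism but makes the hypothesis $\chi \le \chi^*(\prior)$ do double duty. Both constructions exploit the same indifference: at equal margins, any signaling strategy is a best response for the non-credible Expert.

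One slip to correct: the condition you need is $\alpha_{\min}(\chi) \ge 0$, not $\alpha_{\min}(\chi) \le 1$. The posterior at $m_2$ is strictly increasing in $\alpha$ and equals $1$ at $\alpha = 1$, so $\alpha_{\min}(\chi) < 1$ automatically; the binding constraint is that at $\alpha = 0$ the posterior $\prior / [\prior + (1-\prior)(1-\chi)]$ already lies weakly below $q^*$, and this rearranges exactly to $\chi \le \chi^*(\prior)$. With that fix, the remainder of your argument goes through as written.
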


\begin{proof}
See \Cref{app:prp-client-values}.
\end{proof}

Denote by $\eu^*_χ(\prior)$ \textit{Client's highest equilibrium value} given $\chi$ and $\prior$. 
We have
\[
\eu^*_χ(\prior) = \begin{cases}
  \prior \big((l_2 - c_2) - (l_1 - c_1)\big) & \textif \prior \in  (0, \frac{c_2 - c_1}{l_2 - l_1}] \text{ and } χ \in [0 , χ^*(\prior)] \\
  0 & \text{otherwise}.
\end{cases}
\]
When $\prior \in  (0, \frac{c_2 - c_1}{l_2 - l_1}]$ and $χ \in [0, χ^*(\prior) ]$,
the only equilibrium that achieves Client's highest equilibrium value
is an \textit{equal-margin fully-disclosing equilibrium}:
In this equilibrium, Expert sets $p_1 = l_1$ and $p_2 = l_1 - c_1 + c_2$ and discloses all information to Client.
This kind of equilibrium has been studied extensively in the credence goods literature.%
  \footnote{For discussions on the equal-margin equilibrium, 
    see \cite{dulleck} and the references therein.  
    Many works in the credence goods literature call the expert \textit{honest}
    if the equilibrium is fully-disclosing.}
One appealing feature of the equal-margin fully-disclosing equilibrium is that it maximizes the social welfare.
Still, Client's equilibrium value is not unique for $\prior \in  (0, \frac{c_2 - c_1}{l_2 - l_1}]$ and $χ \in [0, χ^*(\prior) ]$,
even when an equal-margin price list is imposed.
Client's actual equilibrium value and the specific equilibrium to be played will depend on the institutional context.

\subsection{Monitoring expert v.s.\ price regulation} \label{sec:price-policy}
We discuss the policies of monitoring expert (which affects $χ$) and price regulation (which imposes some exogenous price list).
In terms of protecting Client's welfare, our findings do not support the usage of monitoring policies without price regulation.
Increasing monitoring intensity can benefit Expert solely and never improves Client's welfare.
Specifically, when $χ > \max \{χ^*(\prior) , 0 \}$, Client's equilibrium value is always zero.

Our previous analysis has focused on Expert's optimal pricing.
To further examine the effects of price regulation,
it is necessary to characterize Client's welfare under different exogenous price lists.
Denote by $u_χ^*(\prior; \pvec)$ Client's \textit{highest payoff among all Expert-optimal $\pvec$-equilibria}.
We characterize $u_χ^*(\prior; \pvec)$ in \Cref{prp:welfare}. Note that
$\pvec \in \Pbar \cup P^2$ is not imposed, as $\pvec$ need not be Expert-optimal.

\begin{proposition} \label{prp:welfare}
\begin{enumerate}[(i)]
  \item 
    For $\pvec \in P^2$,
    $$
    u_χ^*(\prior; \pvec) = \begin{cases}
      (1 - \frac{\prior}{\qbar(\pvec)}) (l_1 - p_1) + 
      \frac{\prior}{\qbar(\pvec)} [\qbar(\pvec) l_2 + (1-\qbar(\pvec)) l_1 - p_2]
      &\textif \prior \in [0, \qbar(\pvec)) \\
      \prior l_2 + (1-\prior) l_1 - p_2 &\textif \prior \in [\qbar(\pvec), 1]
    \end{cases}
    $$  
  \item 
    For $\pvec \in P^1$,
    $$
    u_χ^*(\prior; \pvec) = \begin{cases}
    (1 - \prior) l_1 - p_1 & \text{ if } \prior \in [0,\tilde{q}(\pvec)] \\
    \frac{1 - \prior}{1 - \tilde{q}(\pvec) } \big ((1 - \tilde{q}(\pvec)) l_1 - p_1 \big) + \frac{\prior - \tilde{q}(\pvec) }{1 - \tilde{q}(\pvec) } (l_2 - p_2) & \text{ if } \prior \in (\tilde{q}(\pvec) , 1]
    \end{cases}
    $$ 
    where 
    $$
    \tilde{q}(\pvec) = \begin{cases}
      \frac{l_1 - p_1}{l_1} & \text{if } p_2 \ge l_2 - \frac{l_2 - l_1}{l_1} p_1 \\
      \frac{p_2 - p_1}{l_2} & \text{otherwise.}
     \end{cases}
    $$
  \item 
    For $\pvec \in \Pbar$,
    $$
    u_χ^*(\prior; \pvec) = \prior (l_2 - p_2) + (1 - \prior) (l_1 - p_1).
    $$
  \end{enumerate}
\end{proposition}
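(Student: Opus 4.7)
The plan is to treat the three subregions $\Pbar$, $P^2$, and $P^1$ separately. In each region I will identify Expert's $\pvec$-equilibrium value and then, among all Expert-optimal $\pvec$-equilibria, construct one delivering the stated $u^*_\chi$ while verifying that no Expert-optimal equilibrium can exceed it. A striking feature of the claimed formula is its independence from $\chi$, so the same Client surplus must emerge whether the Expert-optimal outcome is uniquely determined (by \Cref{prp:experiment}) or there is genuine slack across Expert-optimal equilibria.

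The easy cases are $\pvec \in \Pbar$ and $\pvec \in P^2$ with $\prior \ge \qbar(\pvec)$. For $\pvec \in \Pbar$, both treatments yield Expert the same margin, so any equilibrium in which Client purchases a treatment with probability one is Expert-optimal; the fully-revealing experiment with truthful disclosure is incentive-compatible and delivers Client the surplus $\prior(l_2-p_2) + (1-\prior)(l_1-p_1)$, which is the pointwise maximum. For $\pvec \in P^2$ with $\prior \ge \qbar(\pvec)$, Expert's value $p_2-c_2$ forces Client to choose $a_2$ almost surely in any Expert-optimal equilibrium (since $a_0$ and $a_1$ yield Expert strictly less), so the unique surplus is $\prior l_2 + (1-\prior) l_1 - p_2$.

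The central case is $\pvec \in P^2$ with $\prior < \qbar(\pvec)$. When $\chi \ge \uchi(\prior;\pvec)$, \Cref{prp:experiment} pins down the unique Expert-optimal outcome distribution on $\{(0, p_1-c_1), (\qbar(\pvec), p_2-c_2)\}$ with weights $1-\prior/\qbar(\pvec)$ and $\prior/\qbar(\pvec)$; Client buys $a_1$ at posterior $0$ and $a_2$ at posterior $\qbar(\pvec)$, yielding the stated surplus by direct computation. When $\chi < \uchi(\prior;\pvec)$, Expert's value drops to $p_1-c_1$, so that outcome distribution is infeasible. Nonetheless, I will exhibit an Expert-optimal equilibrium inducing the same posterior distribution on $\{0, \qbar(\pvec)\}$: take $\xi = \xi^{\FR}$; at $t_2$ the non-credible Expert sends $m_2$ with probability one, and at $t_1$ he sends $m_1$ with probability $(\uchi(\prior;\pvec)-\chi)/(1-\chi)$, so that the ex ante message distribution satisfies $\mu(m_1\mid t_1) = \uchi(\prior;\pvec)$ and the posterior induced at $m_2$ is exactly $\qbar(\pvec)$. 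At $m_2$, Client mixes between $a_2$ and her other best response---$a_0$ when $\qbar(\pvec) = (p_2-l_1)/(l_2-l_1)$ and $a_1$ when $\qbar(\pvec) = (p_2-p_1)/l_2$---with weight $(p_1-c_1)/(p_2-c_2)$ on $a_2$, which makes the non-credible Expert indifferent across messages and yields the required Expert value $p_1-c_1$. Direct computation of Client's ex ante surplus under this equilibrium matches the formula. For the matching upper bound, I will argue that any Expert-optimal equilibrium in this sub-case has an outcome distribution satisfying Bayes-plausibility, Expert value equal to $p_1-c_1$, and the limited-credibility constraint; maximizing Client's (linear) surplus over this feasible polytope yields $u^*_\chi$.

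The region $\pvec \in P^1$ is handled by a symmetric argument. Expert now prefers $a_1$ over $a_2$, and the shape of $v(\cdot;\pvec)$ mirrors the $P^2$ case with $\tilde{q}(\pvec)$ replacing $\qbar(\pvec)$. For $\prior \le \tilde{q}(\pvec)$, the babbling equilibrium in which Client always chooses $a_1$ is Expert-optimal and delivers $(1-\prior) l_1 - p_1$. For $\prior > \tilde{q}(\pvec)$, the Expert-optimal equilibrium that maximizes Client's surplus splits the belief into $\tilde{q}(\pvec)$ (Client buys $a_1$) and $1$ (Client buys $a_2$), and the low-$\chi$ construction parallels the one above. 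The principal obstacle throughout is the low-credibility analysis in the $P^2$ region, where Expert's $\pvec$-equilibrium value is strictly below $\cav v(\prior;\pvec)$ yet Client must still attain the persuasion-mode surplus; this requires carefully threading Client's indifference at $\qbar(\pvec)$ through Expert's indifference across messages, and then ruling out any superior outcome distribution under the limited-credibility constraint.
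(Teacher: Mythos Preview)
Your plan matches the paper's approach closely: both case-split on $\Pbar$, $P^2$, $P^1$, invoke the unique outcome distribution from \Cref{prp:experiment} in the high-credibility regime, and for the low-credibility regime argue that among the multiple Expert-optimal $\pvec$-equilibria the Client-best one has posteriors $\{0,\qbar(\pvec)\}$ (respectively $\{\tilde q(\pvec),1\}$). Your construction of the low-$\chi$ equilibrium is in fact more explicit than the paper's, which simply asserts that the Client-best outcome distribution is supported on those two posteriors.

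There is one small but real slip. In the $P^2$, low-$\chi$ construction you prescribe the same mixing weight $(p_1-c_1)/(p_2-c_2)$ on $a_2$ in both sub-cases. That weight is correct only when Client's alternative best response at $\qbar(\pvec)$ is $a_0$ (so Expert's payoff from the alternative is $0$). When $\qbar(\pvec)=(p_2-p_1)/l_2$ the alternative is $a_1$, and any positive weight on $a_2$ gives Expert strictly more than $p_1-c_1$ at $m_2$; the non-credible Expert then strictly prefers $m_2$, the induced posterior at $m_2$ falls below $\qbar(\pvec)$ (precisely because $\chi<\uchi(\prior;\pvec)$), and $a_2$ ceases to be a best response---so your proposed profile is not an equilibrium in that sub-case. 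The fix is immediate: set the weight on $a_2$ to zero (equivalently, let Client play $a_1$ at both messages, or simply use the babbling equilibrium). Since Client is indifferent at $\qbar(\pvec)$, her surplus is unaffected, and one checks directly that in this sub-case the proposition's formula collapses to $(1-\prior)l_1-p_1$, exactly what the corrected construction delivers. With this correction your argument goes through and coincides with the paper's.
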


\begin{proof}
See \Cref{app:prp-welfare}.  
\end{proof}

\Cref{prp:welfare} implies that $u_χ^*(\prior; \pvec)$ is independent of $χ$,%
    \footnote{While it is technically accurate to omit the subscript $χ$ in $u_χ^*(\prior; \pvec)$ due to its independence of $χ$, we choose to retain the subscript $χ$ for clarity, as this independence is not obvious beforehand.} %
meaning that monitoring Expert has no effect on
Client's highest equilibrium value when the government can regulate the prices directly.
Moreover, for a fixed $\prior \in (0,1)$, there are
jumps of $u_χ^*(\prior; \pvec)$ over the region $\Pbar$.
Notably, a small decrease in either price may lead to a significant \textit{decrease} in Client's highest equilibrium value
if it discourages Expert from disclosing information.
\Cref{fig:harming-client} illustrates 
how a small decrease in $p_1$ can substantially change
Expert's information disclosure, resulting in harm to Client.
In \Cref{fig:harming-client}(a) where $p_1 - c_1 = p_2 - c_2$, Expert is willing to disclose all information to 
Client, leading to the equilibrium outcomes $(0, p_1 - c_1)$ and $(1, p_1 - c_1)$.
Now consider a small decrease of $ε$ in $p_1$. 
The induced equilibrium outcomes will be either
$\{ (0, p_1 - c_1 - ε), (\qbar, p_1 - c_1 - ε) \}$ as in \Cref{fig:harming-client}(b)
or $\{ (0, p_1 - c_1 - ε), (\qbar, p_1 - c_1) \}$ as in \Cref{fig:harming-client}(c),
depending on whether $χ$ is above $\uchi (\prior ; p_1 - ε, p_2)$.
In both situations, Client's payoff will drop significantly.

\begin{figure}[!htbp]
  \makebox[\linewidth][c]{%
    \begin{subfigure}{0.4\textwidth}
      \centering
      \includegraphics[width=\linewidth]{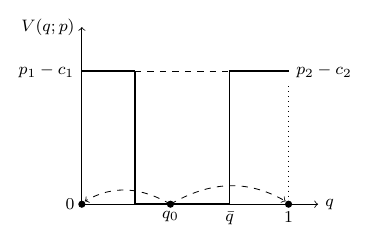}
      \caption{\footnotesize Full disclosure} %
    \end{subfigure}
    $\,$
    \begin{subfigure}{0.4\textwidth}
      \centering
      \includegraphics[width=\linewidth]{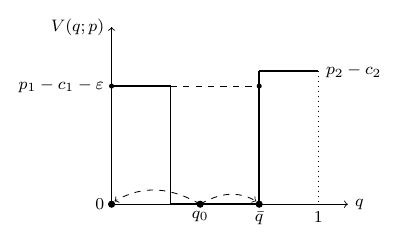}
      \caption{\footnotesize $χ < \uchi (\prior ; p_1 - ε, p_2)$}
    \end{subfigure}
    $\,$
    \begin{subfigure}{0.4\textwidth}
      \includegraphics[width=\linewidth]{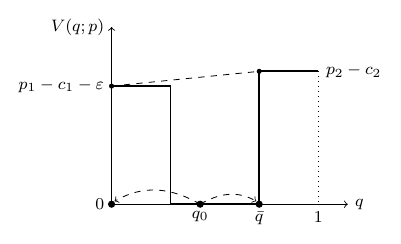}
      \caption{\footnotesize $χ \ge \uchi (\prior ; p_1 - ε, p_2)$}
    \end{subfigure}%
  }  
    \caption{How decreasing $p_1$ can harm Client\label{fig:harming-client}}
\end{figure}

This observation emphasizes the superiority of the equal-margin fully-disclosing equilibrium in
enhancing social welfare. \Cref{cor:discontinuity} formalizes this observation,
where $u^*_χ | _{P^i}$ is the restriction of $u^*_χ$ to  $P^i$ for $i \in \set{1,2}$.

\begin{corollary}\label{cor:discontinuity}
  If $\pbar \in \Pbar$, then
  $u_χ^*(\prior; \pbar) > \limsup\limits_{p \to \pbar} u^*_χ | _{P^1} (\prior; \pvec)$ 
  and $u_χ^*(\prior; \pbar) > \limsup\limits_{p \to \pbar} u^*_χ | _{P^2}(\prior; \pvec)$
  for all $\prior \in (0,1)$ and $χ \in (0,1)$.
\end{corollary}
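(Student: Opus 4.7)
The plan is to apply \Cref{prp:welfare} mechanically and compute the limsup-gap on each side. By part~(iii), $u_χ^*(\prior;\pbar) = \prior(l_2-\pbar_2) + (1-\prior)(l_1-\pbar_1)$, and the defining identity $\pbar_2 - \pbar_1 = c_2 - c_1 > 0$ for $\pbar \in \Pbar$ will be the algebraic engine that drives every strict inequality.

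As a preparatory step I would verify that $\qbar(\cdot)$ and $\tilde{q}(\cdot)$ are continuous on $P$; this reduces to checking that the two branches in each piecewise definition agree on the switching curve $p_2 = l_2 - \frac{l_2 - l_1}{l_1} p_1$, where both collapse to $(l_1 - p_1)/l_1$. Continuity then ensures that for every sequence $\pvec^n \to \pbar$ in $P^2$ (resp.\ $P^1$), the appropriate branch of \Cref{prp:welfare}(i) (resp.\ (ii)) is eventually pinned down by the sign of $\prior - \qbar(\pbar)$ (resp.\ $\prior - \tilde{q}(\pbar)$), so $\lim u^*_χ(\prior;\pvec^n)$ is sequence-independent and equals the limsup.

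On the $P^2$ side, I would split on $\prior$ versus $\qbar(\pbar)$. When $\prior \ge \qbar(\pbar)$ the limit equals $\prior l_2 + (1-\prior)l_1 - \pbar_2$, so the gap is $(1-\prior)(\pbar_2-\pbar_1) = (1-\prior)(c_2-c_1) > 0$. When $\prior < \qbar(\pbar)$, straightforward telescoping using $\pbar_2 = \pbar_1 + (c_2-c_1)$ collapses the gap to $\prior(1-\qbar(\pbar))(c_2-c_1)/\qbar(\pbar)$, which is positive because $0 < \qbar(\pbar) < 1$ (the upper bound uses assumption~(iv), i.e., $l_2 - l_1 > c_2 - c_1$; the lower bound follows from a direct case-by-case check of the formula for $\qbar$ on $\Pbar$). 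On the $P^1$ side I would run the parallel case split. When $\prior \le \tilde{q}(\pbar)$ the limit is $(1-\prior)l_1 - \pbar_1$, giving gap $\prior(l_2 - c_2 + c_1) > 0$; when $\prior > \tilde{q}(\pbar)$ the same kind of telescoping (using either $(1-\tilde{q}(\pbar))l_1 = \pbar_1$ in Case~I or $\tilde{q}(\pbar) = (c_2-c_1)/l_2$ in Case~II) yields gap $\tilde{q}(\pbar)(1-\prior)(l_2 - c_2 + c_1)/(1-\tilde{q}(\pbar))$.

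The main obstacle is the algebraic bookkeeping that extracts the common factor $(c_2 - c_1)$ or $(l_2 - c_2 + c_1)$ from what initially look like bulky affine fractions in $\qbar(\pbar)$ and $\tilde{q}(\pbar)$; this is mechanical but error-prone. A secondary concern is the corner $\pbar = (l_1, l_1 + c_2 - c_1)$ of $\Pbar$, where $\tilde{q}(\pbar) = 0$ and the $P^1$ Case-2 gap formula above degenerates to zero: there, admissible $P^1$-sequences must satisfy $p_1^n \to l_1$ from below, and one would need either to argue separately that $u^*_χ(\prior;\pbar)$ still strictly dominates the limsup along these constrained sequences, or to read the corollary as applying to $\pbar \in \Pbar$ with $\pbar_1 < l_1$.
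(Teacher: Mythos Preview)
Your approach is essentially the same as the paper's: both invoke \Cref{prp:welfare}, split according to whether the first or second branch of the piecewise formula applies at $\pbar$, and show the gap is strictly positive in each case. The only packaging difference is that where you compute explicit closed-form gaps $\prior(1-\qbar)(c_2-c_1)/\qbar$ and $\tilde q(1-\prior)(l_2-c_2+c_1)/(1-\tilde q)$, the paper instead defines auxiliary functions $g(\tilde q)$ and $h(\qbar)$ (with the same formulas as the second-branch limits but treating $\tilde q$, $\qbar$ as free variables), observes that $u_\chi^*(\prior;\pbar)=g(0)=h(1)$, and uses $g'<0$, $h'>0$ to conclude. Your formulas are exactly $g(0)-g(\tilde q(\pbar))$ and $h(1)-h(\qbar(\pbar))$, so the two arguments are the same computation in different clothing; your version has the mild advantage of making the gaps quantitative.

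Your corner concern at $\pbar=(l_1,\,l_1+c_2-c_1)$ is legitimate and not an artefact of your method: the paper's monotonicity argument needs $\tilde q(\pbar)>0$ just as your explicit formula does, and the paper is silent on this point. Along the $P^1$-sequence $p_1^n=l_1$, $p_2^n\nearrow l_1+c_2-c_1$ one has $\tilde q(p^n)=0$ and $u_\chi^*(\prior;p^n)=\prior(l_2-p_2^n)\to\prior(l_2-\pbar_2)=u_\chi^*(\prior;\pbar)$, so the strict inequality fails there. Your proposed fix (restricting to $\pbar_1<l_1$ or treating the corner separately) is the right way to handle it.
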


\begin{proof}
  See \Cref{app:cor-discontinuity}.
\end{proof}

\section{Discussions} \label{sec:discuss}

We discuss two variants of our model.
In the first variant,  
Expert cannot observe the problem type directly.
In the second variant,
Client also observes Expert's credibility type when it is realized.

\subsection{Unknown problem type\label{sec:discussion-type}}

In the main analysis, we have assumed that
Expert privately observes Client's problem type
upon her visit.
An alternative setting worth exploring
is that Expert cannot observe Client's problem type directly. 
In this case, Expert's private information will be the message generated by his chosen experiment 
instead of the problem type. 
Below we argue that this modification does not affect our main results.

The characterization of Expert's equilibrium value will be the same as in \Cref{prp:main}.
This can be obtained by  comparing our main model 
$\cG$ with the alternative game, denoted by $\cG^A$, where the only difference is Expert's signaling strategy being restricted to $σ^A: M \to \Delta (M)$. 
It can be verified that 
a $\pvec$-equilibrium of $\cG^A$ is essentially a
$\pvec$-equilibrium of $\cG$ (see \Cref{app:unknown-type}). 
As a result, Expert's equilibrium value of
$\cG^A$
is weakly lower than that of $\cG$.
On the other hand, in our main model $\cG$
Expert 
can always use a fully-revealing experiment
to achieve his equilibrium value $\ev^*_χ (\prior)$.%
    \footnote{Specifically, when $χ > \max \{χ^*(\prior) , 0 \}$, Expert always chooses a fully-revealing experiment (\Cref{cor:signalling});
    otherwise, $\ev^*_χ (\prior) = l_1 - c_1$ and
    there exists an equal-margin fully-disclosing equilibrium, in which Expert chooses a fully-revealing experiment.}
Since Expert can learn the problem type from the experiment result, these equilibria of $\cG$ can also be seen as equilibria of $\cG^A$.
Consequently, Expert's equilibrium value of $\cG^A$ remains $\ev^*_χ (\prior)$.

Furthermore, the characterization of Client's equilibrium values will be the same as that of our main model.
When $χ > \max \{χ^*(\prior) ,0 \}$, both the optimal price list $\pvec^*$ and the Expert-optimal $\pvec^*$-equilibrium outcome distribution 
are unique in our main model.
Since the alternative game $\cG^A$ has fewer $\pvec$-equilibria,
the optimal price list $\pvec^*$ and the Expert-optimal
outcome distribution are also unique in $\cG^A$.
As a result, the set of Client's equilibrium values in $\cG^A$ is $\set{0}$.
When $\prior \le \frac{c_2 - c_1}{l_2 - l_1}$ and $χ \le χ^*(\prior)$,
we have focused on those equilibria in which $\sigma = \xi$ to characterize the set of Client's equilibrium values in our main model.
The same argument can be applied to $\cG^A$, and it follows that
the set of Client's equilibrium values in $\cG^A$ is $[0, S(\prior) - \ev_χ^* (\prior)]$.

\subsection{Public credibility \label{sec:discuss-public-credibility}}

Suppose Expert's credibility type is publicly revealed when it is realized. 
Then, when Expert is revealed credible, it is common knowledge that the message is from a credible source and Expert's highest payoffs are $\cav v(\prior; \pvec)$.
When Expert is revealed non-credible,
it is common knowledge that Expert can manipulate the message and his     
highest payoffs are $\qcav v(\prior; \pvec)$.
So Expert's $\pvec$-equilibrium value is:
\begin{equation*} %
  v_{χ}^{\pc}(\prior; \pvec) = (1 - χ)\qcav v(\prior; \pvec) +  χ \cav v(\prior; \pvec).
\end{equation*}
Expert chooses some price list to maximize $v_{χ}^{\pc}(\prior; \pvec)$.
In \Cref{app:public-credibility},
we show that an Expert-optimal price list is 
$p_1^{\pc} = l_1$ and
$$
p_2^{\pc} (\prior, χ) = \begin{cases}
  l_2 
    & \textif  \prior < q_\chi^{\pc}\\
  l_2 \text{ or } \prior l_2 + (1-\prior) l_1 
    & \textif  \prior = q_\chi^{\pc}\\
  \prior l_2 + (1-\prior) l_1
    & \textif  \prior > q_\chi^{\pc}
\end{cases}  
$$
where 
$q_\chi^{\pc} = \frac{c_2 - c_1}{(1 - \chi) (l_2 - l_1) + \chi (c_2 - c_1)}$.
Correspondingly, Expert's equilibrium value is
$$
\ev_{χ}^{\pc} (\prior) %
= \begin{cases}
  \prior l_2 + (1-\prior) l_1 - c_2
    & \textif  \prior \ge q_\chi^{\pc} \\
  \chi \prior (l_2 - c_2) + (1 - \chi \prior) (l_1 - c_1) 
    & \textif  \prior < q_\chi^{\pc}
\end{cases}
$$

\begin{figure}[!htbp]
  \centering
  \includegraphics{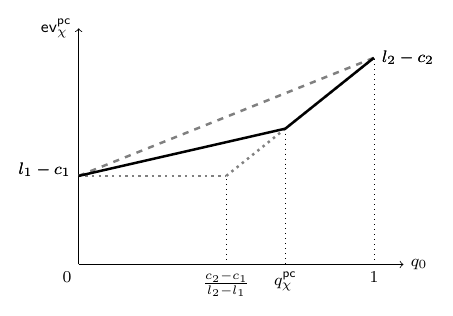}  
  \caption{Expert's equilibrium value given public credibility\label{fig:modification}}
\end{figure}

\Cref{fig:modification} illustrates Expert's equilibrium value for some fixed $χ \in (0, 1)$.
As comparisons, 
\Cref{fig:modification} also 
illustrates the case for $χ = 0$ (the dotted curve) 
and the case for $χ = 1$ (the dashed line). 
Note that the kink $q_χ^{\pc}$ converges to $1$ (resp.\ $\frac{c_2 - c_1}{ l_2 - l_1 }$) as $χ$ converges to $1$
(resp.\ $0$).

Allowing public credibility affects the characterizations of Expert's and Client's equilibrium values.
Notably, we find that the comparison between
$\ev_χ^* (\prior)$ and $\ev_χ^{\pc} (\prior)$ depends solely on the prior.
When $\prior > \frac{c_2 - c_1}{l_2 - l_1}$ (resp.\ $\prior < \frac{c_2 - c_1}{l_2 - l_1}$), we have $\ev_χ^* (\prior) > \ev_χ^{\pc} (\prior)$ (resp.\ $\ev_χ^* (\prior) < \ev_χ^{\pc} (\prior)$)
for any $χ \in (0,1)$.
In particular, when $\prior = \frac{c_2 - c_1}{l_2 - l_1}$, we have
$\ev_χ^* (\prior) = \ev_χ^{\pc} (\prior)$  for any $χ \in (0,1)$. 
Additionally, in the case of public credibility,
Client's equilibrium value is always zero, and an equal-margin fully-disclosing equilibrium does not exist for any $(χ,\prior) \in (0,1) \times (0,1)$.

\bibliographystyle{jpe}
\bibliography{credibility.bib}

\clearpage

\appendix

\section{Appendix}

The Appendix contains the omitted proofs.

\subsection{Proof of \Cref{lmm:1} \label{app:lmm-1}}

Fix some $\pvec = (p_1, p_2) \in P^1$, and it follows that $v_χ^* (\prior ; \pvec) \le \pm - \cm$.
We further show that
\begin{equation}
  v_χ^* (\prior ; \pvec) < l_1 - c_1 \text{ for all } \prior \in (0,1). \label{eq:lmm1}  
\end{equation}
When $v_χ^* (\prior ; \pvec) < \pm - \cm $,
condition \eqref{eq:lmm1} holds as $p_1 \le l_1$;
When $v_χ^* (\prior ; \pvec) = \pm - \cm$,
Client must choose $\am$ in all possible outcomes and
$v_χ^* (\prior ; \pvec) \le \prior l_1 - c_1 < l_1 - c_1$.

Consider another price list $\pvec' = (l_1, l_1 - \cm + \cs)$.
There always exists a $\pvec'$-equilibrium in which Expert discloses all information to Client (i.e., both $\sigma$ and $\xi$ are fully-revealing) and in which
Client chooses $a_i$ given the problem type $t_i$ for $i \in \{1,2\}$. 
Therefore,
$$
v_χ^* (\prior ; \pvec') \ge l_1 - c_1 \text{ for all } \prior \in (0,1).
$$

\subsection{Proof of \Cref{prp:p-eq-value} \label{app:prp-p-eq-value}}

Fixing some $(\prior, \pvec)$ satisfying $\pvec \in P^2$ and $\prior < \qbar(\pvec)$,
we calculate Expert's $\pvec$-equilibrium value $v^*_{\chi}(\prior ; \pvec)$.
In a seminal paper, \cite{lipnowski2022} provide a recipe for calculating $v^*_{\chi}(\prior ; \pvec)$ in a more general setting than our main model.
For the sake of completeness, we restate some of their results here.
Denote by $\PP \in \Delta ([0,1] \times \RR)$ the ex ante joint distribution of Client's posterior and Expert's expected payoff.
Refer to $(q, \s) \in [0,1] \times \RR$ as an \textit{outcome of $\cG_{\pvec}$}
and refer to $\PP$ as an \textit{outcome distribution of $\cG_{\pvec}$}.
Let $\smax \equiv \max \{\s : (q, \s) \in \supp (\PP) \}$
be Expert's maximal payoff under distribution $\PP$.
We decompose $\PP$ into two distributions---the distribution over outcomes conditional on $\s =\smax$ (denoted by $\GG$)
and the distribution over outcomes conditional on $\s < \smax$ (denoted by $\BB$).
It follows that $\PP = (1 - k) \GG + k \BB$,
where $k \in [0,1]$ is the probability that Expert's payoffs are strictly below $\smax$.
Let $γ$ and $β$ be the mean posterior of problem type $t_2$ conditional on $\GG$ and $\BB$ respectively.
Then,
\begin{equation}
  (1 - k) γ + k β = \prior.
  \tag{BS} 	
  \label{eq:BS}
\end{equation}

Besides the Bayesian-splitting constraint~\eqref{eq:BS}, there exists an additional constraint stemmed from the limited credibility of Expert.
Conditional on Expert being non-credible, he will always send some favorable message which leads to the highest payoff $\smax$.
So, for each $t \in T$, the probability of the event
$$ E_1 =
  \{ 
    \text{The problem type is $t$ and Expert's payoff is $\smax$ in the outcome}
  \}
$$
is at least the probability of the event
$$
E_2 = \{
  \text{The problem type is $t$ and Expert is non-credible}
\}.
$$
It follows that
\begin{equation} \label{eq:chi} \tag{$\chi^C$}
  (1 - k) γ \geq \prior (1 - \chi), \quad 
  (1 - k) (1- γ)  \geq  (1- \prior) (1 - \chi).
\end{equation}

Define an auxiliary correspondence that caps the correspondence
$V(q ; \pvec)$ by some $\bar \s \in \RR$:
$V_{\wedge \bar \s} (q;\pvec) = \{\min \{ \s, \bar \s \}: \s \in V(q ; \pvec) \}$.
Let $v_{\wedge γ} (q; \pvec) \equiv \max V_{\wedge \qc v (γ ; \pvec)}(q ; \pvec)$.
\citeauthor{lipnowski2022} (2022, Theorem~1)
show that Expert's $\pvec$-equilibrium value can be calculated as the value function of the following maximization problem:
\begin{align}
    \label{prg:v} \tag{$\mathcal{M}$}
	  v^*_{\chi}(\prior; \pvec) =
	  & \max_{β, γ , k \in [0,1]} 
	  {(1 - k) \, \qc v (γ ; \pvec) + k \, \cav v_{\wedge  γ} (β; \pvec) }\\
	  & \text{ subject to } \eqref{eq:BS} \text{ and } \eqref{eq:chi}. \nonumber
\end{align}
That is, Expert chooses some $(β^*, γ^*, k^*)$ to maximize his objective function 
$(1 - k) \qc v (γ ; \pvec) + k \cav v_{\wedge  γ} (β; \pvec)$,
subject to constraints \eqref{eq:BS} and \eqref{eq:chi}.

We first calculate $\cav v_{\wedge γ}( q ; \pvec)$ as $γ$ varies.
Note that $\qcav v(γ ; \pvec)$ takes the value of either $p_1 - c_1$ or $p_2 - c_2$.
When $\qcav v(γ ; \pvec) = p_2 - c_2$,
$\cav v_{\wedge γ}(q ; \pvec)$ is the same as $\cav v( q ; \pvec)$;
When $\qcav v(γ ; \pvec) = p_1 - c_1$,
$\cav v_{\wedge γ}(q ; \pvec)$ takes the constant value of $p_1 - c_1$. To sum up, 
$$
\cav v_{\wedge γ} (q ; \pvec) = \begin{cases}
  \cav v(q ; \pvec) & \text{if } γ \ge \qbar(\pvec); \\
  p_1 - c_1     & \text{if } γ < \qbar(\pvec).
\end{cases}
$$
Below we show that Expert always chooses $γ^* = \qbar(\pvec)$ in the solution to \eqref{prg:v}
whenever his $\pvec$-equilibrium value is strictly higher than $\qc v (\prior ; \pvec) = p_1 - c_1$.

\begin{claim}\label{clm:2}
  If $v^*_{\chi}(\prior; \pvec) > \qc v (\prior ; \pvec)$, then 
  $γ^* = \qbar(\pvec)$ in the solution to \eqref{prg:v}.
\end{claim}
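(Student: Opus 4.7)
The plan is to handle the two cases $\gamma^* < \qbar(\pvec)$ and $\gamma^* > \qbar(\pvec)$ separately. For the first, I would note that if $\gamma^* < \qbar(\pvec)$, then by the piecewise formulas for $\qc v(\cdot; \pvec)$ and $\cav v_{\wedge \gamma^*}(\cdot; \pvec)$ just established, both terms of the objective in \eqref{prg:v} equal $p_1 - c_1$, so the value of the program collapses to $p_1 - c_1 = \qc v(\prior; \pvec)$, contradicting the hypothesis $v^*_\chi(\prior; \pvec) > \qc v(\prior; \pvec)$. Hence $\gamma^* \geq \qbar(\pvec)$.

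The substantive case is $\gamma^* > \qbar(\pvec)$, and the key observation to exploit is that throughout the region $\{\gamma \geq \qbar(\pvec)\}$ the objective reduces to $(1-k)(p_2 - c_2) + k\cav v(\beta; \pvec)$, which is independent of $\gamma$. I would therefore propose the alternative triple $(\beta^*, \qbar(\pvec), k')$, keeping $\beta^*$ fixed and letting Bayes-splitting pin down $k' = \tfrac{\qbar(\pvec) - \prior}{\qbar(\pvec) - \beta^*}$. A short preliminary step---observing that $\prior < \qbar(\pvec) \leq \gamma^*$ together with \eqref{eq:BS} forces $\beta^* < \prior$ and $k^* \in (0,1)$---confirms this is well-defined. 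Monotonicity of $x \mapsto \tfrac{x - \prior}{x - \beta^*}$ on $(\prior,\infty)$ then gives $k' < k^*$, so that the candidate's objective exceeds the original's by $(k^* - k')\bigl[(p_2 - c_2) - \cav v(\beta^*; \pvec)\bigr] \geq 0$, using that $p_2 - c_2$ is the supremum of $v(\cdot; \pvec)$.

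The main step, and the one requiring most care, is to verify that the new triple satisfies the credibility constraint \eqref{eq:chi}. Substituting $1 - k' = \tfrac{\prior - \beta^*}{\qbar(\pvec) - \beta^*}$ and $1 - k^* = \tfrac{\prior - \beta^*}{\gamma^* - \beta^*}$, the two required inequalities $(1-k')\qbar(\pvec) \geq \prior(1-\chi)$ and $(1-k')(1-\qbar(\pvec)) \geq (1-\prior)(1-\chi)$ each reduce, after cross-multiplying against the corresponding constraints on $(\beta^*, \gamma^*, k^*)$, to the single algebraic inequality $\gamma^* \geq \qbar(\pvec)$. Since this is exactly the case hypothesis, feasibility follows, so the alternative triple is feasible and weakly improves the objective; thus an optimal solution with $\gamma^* = \qbar(\pvec)$ exists, and the claim is established.
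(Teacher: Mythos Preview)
Your proof is correct and follows essentially the same approach as the paper: both rule out $\gamma^* < \qbar(\pvec)$ by observing the objective collapses to $p_1 - c_1$, and both handle $\gamma^* > \qbar(\pvec)$ by fixing $\beta^*$, lowering $\gamma$ (the paper by $\varepsilon$, you all the way to $\qbar(\pvec)$), and noting that the induced drop in $k$ both preserves feasibility and improves the objective. One small remark: the paper streamlines your feasibility check by first rewriting \eqref{eq:chi} (via \eqref{eq:BS}) as $k\beta \le \chi\prior$ and $k(1-\beta) \le \chi(1-\prior)$, after which lowering $k$ with $\beta$ fixed is trivially feasible; also, since you established $\beta^* < \prior < \qbar(\pvec)$, your ``weak'' improvement is in fact strict, so your argument already yields the paper's contradiction rather than merely the existence of an optimizer at $\qbar(\pvec)$.
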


\begin{proof}
Since $v^*_{\chi}(\prior; \pvec) > \qc v (\prior ; \pvec)$,
$γ \ge \qbar(\pvec)$ in any $\pvec$-equilibrium.
Further, we have $\cav v(β ; \pvec) = \lambda (\pvec) β + p_1 - c_1$ where $\lambda (\pvec) = \frac{p_2 - c_2 - (p_1 - c_1)}{\qbar(\pvec)}$.
Expert's maximization problem \eqref{prg:v} can be rewritten as below. 
\begin{equation} \label{prg:v'} \tag{$\mathcal{M}'$}
  v^*_{\chi}(\prior; \pvec) =
  \max_{β, γ, k \in [0,1]} 
  {(1 - k) \, (p_2 - c_2) + k \big( \lambda (\pvec) β + p_1 - c_1  \big)}
\end{equation}\vspace{-1.0cm}
\begin{align*}
  \text{subject to } 
  & \label{eq:bs-b}  (1 - k) γ + k β = \prior, \tag{BS}
  \\
  \label{eq:chi1_2} \tag{$\chi_1^C$} & k β \leq \chi \prior, \\
  \label{eq:chi2_2} \tag{$\chi_2^C$} & k (1 - β) \leq \chi (1 - \prior).
\end{align*}

Let $(β^*, γ^*, k^*)$ be a solution to \eqref{prg:v'}. Assume by contradiction that $γ^* > \qbar(\pvec)$.
Consider another $γ = γ^* - ε$ for some $ε > 0$. Meanwhile, $β$ is fixed at $β^*$, and
$k$ is adjusted accordingly such that constraint~\eqref{eq:bs-b} is satisfied:
\begin{equation} \label{eq:k}
k (β^*, γ) = 
    \frac{γ - \prior}{γ - β^*}.
\end{equation}
Since $γ < γ^*$, we have $k (β^*, γ) < k^*$ by \Cref{eq:k}.
A lower $k$ relaxes the constraints \eqref{eq:chi1_2} and \eqref{eq:chi2_2} and leads
to a strictly higher objective value in \eqref{prg:v'}.
The objective value gets higher because (i)
$p_2 - c_2 >  \lambda (\pvec) β^* + p_1 - c_1$ and (ii)
$\lambda (\pvec)$ is independent of $k$.
Therefore, lowering $γ^*$ by some arbitrarily small $ε > 0$ strictly increases the objective value.
Contradiction.
\end{proof}

If $v^*_{\chi}(\prior; \pvec) > \qc v (\prior ; \pvec)$,
we can simplify the problem \eqref{prg:v'} by plugging $γ = \qbar(\pvec)$ 
into the objective function and the constraints.
Below we show that this will lead to $\chi \geq \uchi ( \prior ; \pvec)$ and
$v^*_{\chi}(\prior ; \pvec) = \cav v (\prior ; \pvec)$.

\begin{claim}
  If $v^*_{\chi}(\prior; \pvec) > \qc v (\prior ; \pvec)$, then
  $v^*_{\chi} (\prior ; \pvec) = \cav v (\prior ; \pvec)$ and 
  $\chi \geq \uchi ( \prior ; \pvec)$.
\end{claim}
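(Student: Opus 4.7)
The plan is to build directly on the previous Claim, which already tells us that any optimizer of $(\mathcal{M}')$ has $\gamma^* = \qbar(\pvec)$ whenever $v^*_{\chi}(\prior; \pvec) > \qc v(\prior; \pvec)$. So I would start by plugging $\gamma = \qbar(\pvec)$ into the Bayes-splitting constraint \eqref{eq:bs-b} to obtain the one-parameter family
$$
k(\beta) = \frac{\qbar(\pvec) - \prior}{\qbar(\pvec) - \beta}, \qquad \beta \in [0, \prior],
$$
which exhausts the feasible pairs $(k, \beta)$ under BS.

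Next, I would compute the objective of $(\mathcal{M}')$ along this family. Using $\lambda(\pvec) \qbar(\pvec) = (p_2 - c_2) - (p_1 - c_1)$ and BS, the expression $(1-k)(p_2-c_2) + k(\lambda(\pvec)\beta + p_1 - c_1)$ simplifies, after collecting the $(p_1 - c_1)$ term, to $\lambda(\pvec)\bigl[(1-k)\qbar(\pvec) + k\beta\bigr] + (p_1 - c_1) = \lambda(\pvec)\prior + (p_1 - c_1) = \cav v(\prior;\pvec)$. Thus the objective is constant in $\beta$ along the feasible family and equals $\cav v(\prior;\pvec)$; intuitively, this is just the statement that $q=0$ and $q=\qbar(\pvec)$, or any two points within the linear piece of $\cav v$ on $[0,\qbar(\pvec)]$, average to the same height at $\prior$.

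It remains to determine when this family is nonempty under the credibility constraints \eqref{eq:chi1_2}–\eqref{eq:chi2_2}. Constraint \eqref{eq:chi1_2} reads $k\beta \leq \chi\prior$, which is trivially satisfied at $\beta = 0$, so it imposes no binding restriction. The key is \eqref{eq:chi2_2}: substituting $k(\beta)$ gives
$$
k(\beta)(1-\beta) = \frac{(\qbar(\pvec) - \prior)(1 - \beta)}{\qbar(\pvec) - \beta},
$$
and a short derivative check (the derivative has sign $1 - \qbar(\pvec) > 0$) shows this is strictly increasing in $\beta$, hence minimized at $\beta = 0$ with value $(\qbar(\pvec) - \prior)/\qbar(\pvec)$. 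Therefore \eqref{eq:chi2_2} admits a feasible $\beta$ if and only if $(\qbar(\pvec) - \prior)/\qbar(\pvec) \leq \chi(1-\prior)$, which rearranges to $\chi \geq \uchi(\prior;\pvec)$.

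Putting these together: the hypothesis $v^*_{\chi}(\prior;\pvec) > \qc v(\prior;\pvec)$ forces $\gamma^* = \qbar(\pvec)$, which forces the family above to be nonempty, which in turn forces $\chi \geq \uchi(\prior;\pvec)$; and along this family the objective equals $\cav v(\prior;\pvec)$, while $\cav v(\prior;\pvec)$ is also the universal upper bound for $v^*_{\chi}(\prior;\pvec)$ (persuasion bound), so equality holds. The main subtlety I expect is the bookkeeping around which endpoint of the feasible interval minimizes $k(1-\beta)$—this is the calculation that pins down precisely the threshold $\uchi(\prior;\pvec)$ and must be done carefully so the inequality goes the right way.
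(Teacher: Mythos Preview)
Your proposal is correct and follows essentially the same route as the paper: both arguments plug $\gamma^*=\qbar(\pvec)$ (from the previous claim) into the Bayes-splitting constraint, observe that the objective of $(\mathcal{M}')$ collapses to the constant $\lambda(\pvec)\prior + (p_1-c_1)=\cav v(\prior;\pvec)$, and then reduce feasibility of the remaining constraints to the single inequality $\chi\ge\uchi(\prior;\pvec)$. The only cosmetic difference is that the paper parametrizes the feasible family by $k$ (solving BS for $\beta$ in terms of $k$) whereas you parametrize by $\beta$; your monotonicity check on $k(\beta)(1-\beta)$ and the paper's redundancy of \eqref{eq:chi1_2} given \eqref{eq:chi2_2} are two sides of the same computation.
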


\begin{proof}
  Suppose $v^*_{\chi}(\prior; \pvec) > \qc v (\prior ; \pvec)$.
  By \Cref{clm:2}, it's optimal for Expert to set
  $γ = \qbar(\pvec)$,
  and constraint \eqref{eq:bs-b} implies $β = \qbar(\pvec) - \qbar(\pvec)/k + \prior /k$.
  Plugging $γ = \qbar(\pvec)$ and $β = \qbar(\pvec) - \qbar(\pvec)/k + \prior /k$ 
  into the objective function of problem \eqref{prg:v'} yields:
\begin{align*}
  v^*_{\chi}(\prior ; \pvec) = λ(\pvec) \prior + (p_1 - c_1).
\end{align*}
Thus, the objective function is exactly $\cav v (\prior ; \pvec)$ and is independent of $k$.
On the other hand, the constraints of \eqref{prg:v'}
reduce to
\begin{align}
  & k \le 1 - (1 - \chi) \frac{\prior}{\qbar(\pvec)}, \label{eq:c1} \\
  & k \le 1 - (1 - \chi) \frac{1 - \prior}{1 - \qbar(\pvec)} \label{eq:c2}\\  
  &
  β = \qbar(\pvec) - \frac{\qbar(\pvec)}{k} + \frac{\prior}{k} \in [0,1].  \label{eq:c3}  
\end{align}
Our prerequisite  $v^*_{\chi}(\prior; \pvec) > \qc v (\prior ; \pvec)$ holds
only if there exists some $k \in [0,1]$ such that constraints (\ref{eq:c1}--\ref{eq:c3}) hold simultaneously.
Constraint~\eqref{eq:c1} is redundant given constraint~\eqref{eq:c2}, 
and constraints~\eqref{eq:c2} and \eqref{eq:c3} imply
that $\chi \geq \uchi ( \prior ; \pvec) \equiv \frac{\qbar(\pvec) - \prior}{ \qbar(\pvec) (1 - \prior)}$.
\end{proof}

We conclude the proof by showing that the condition $\chi \ge \uchi ( \prior ; \pvec)$
also implies $v^*_{\chi}(\prior; \pvec) > \qc v (\prior ; \pvec)$.

\begin{claim}
  If $\chi \ge \uchi ( \prior ; \pvec)$, then 
  $v^*_{\chi}(\prior; \pvec) > \qc v (\prior ; \pvec)$.
\end{claim}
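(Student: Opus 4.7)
The plan is to exhibit an explicit $\pvec$-equilibrium whose ex ante Expert payoff is exactly $\cav v(\prior;\pvec)$. Since $\pvec \in P^2$ and $\prior < \qbar(\pvec)$, we have $\cav v(\prior;\pvec) = \lambda(\pvec)\prior + (p_1 - c_1) > p_1 - c_1 = \qc v(\prior;\pvec)$, so this immediately gives $v^*_\chi(\prior;\pvec) \ge \cav v(\prior;\pvec) > \qc v(\prior;\pvec)$, which is the desired strict inequality.

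The construction uses exactly the strategy profile announced (but not yet verified) in \Cref{prp:experiment}. Pick two distinct messages $m_1, m_2 \in M$ and define
\[
\xi(m_1 \mid t_1) = \frac{\uchi(\prior;\pvec)}{\chi}, \quad \xi(m_2 \mid t_1) = 1 - \frac{\uchi(\prior;\pvec)}{\chi}, \quad \xi(m_2 \mid t_2) = 1,
\]
together with the non-credible signalling rule $\sigma(m_2 \mid t) = 1$ for both $t \in T$ and Client strategy $\rho(a_1 \mid m_1) = 1$, $\rho(a_2 \mid m_2) = 1$. The assumption $\chi \ge \uchi(\prior;\pvec)$ is exactly what is needed to make $\uchi/\chi \in [0,1]$, so $\xi$ is well defined.

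Next I would verify that this is a $\pvec$-equilibrium. Apply Bayes's rule to obtain the posteriors: $m_1$ is only ever sent when Expert is credible and the type is $t_1$, so $\eta(t_2 \mid m_1) = 0$; a direct computation using $(1-\prior)\uchi(\prior;\pvec) = (\qbar(\pvec) - \prior)/\qbar(\pvec)$ gives $\eta(t_2 \mid m_2) = \qbar(\pvec)$. At the posterior $0$, Client weakly prefers $a_1$ (since $p_1 \le l_1$ and $p_1 \le p_2$); at the posterior $\qbar(\pvec)$, Client is by construction indifferent between $a_2$ and her next-best option, so choosing $a_2$ is optimal. Expert's signalling is also optimal: deviating to $m_1$ yields $p_1 - c_1$, whereas sending $m_2$ yields $p_2 - c_2$, and $p_2 - c_2 > p_1 - c_1$ because $\pvec \in P^2$.

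Finally, I would compute the ex ante payoff. Message $m_1$ occurs with probability $(1-\prior)\chi \cdot (\uchi/\chi) = (1-\prior)\uchi(\prior;\pvec)$, and $m_2$ with the complementary probability; using $1 - (1-\prior)\uchi(\prior;\pvec) = \prior/\qbar(\pvec)$, the expected payoff reduces to
\[
\tfrac{\prior}{\qbar(\pvec)}(p_2 - c_2) + \bigl(1 - \tfrac{\prior}{\qbar(\pvec)}\bigr)(p_1 - c_1) = \lambda(\pvec)\prior + (p_1 - c_1) = \cav v(\prior;\pvec),
\]
completing the argument. The only non-routine step is matching the Bayes posterior of $m_2$ with $\qbar(\pvec)$; everything else is unpacking definitions.
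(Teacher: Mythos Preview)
Your proof is correct. Both you and the paper establish the claim by exhibiting a feasible object that achieves the value $\cav v(\prior;\pvec)$; the difference is in the language. The paper works inside the Lipnowski--Ravid--Shishkin maximization program \eqref{prg:v'}: it simply checks that the triple $(\beta^*,\gamma^*,k^*)=(0,\qbar(\pvec),1-\prior/\qbar(\pvec))$ satisfies the constraints \eqref{eq:chi1_2}--\eqref{eq:chi2_2} when $\chi\ge\uchi(\prior;\pvec)$, and reads off the objective value. You instead construct the underlying $\pvec$-equilibrium explicitly---the experiment $\xi^*$ and signalling rule $\sigma^{\NR}$ of \Cref{prp:experiment}---and verify Bayes consistency, Client optimality, and Expert optimality by hand. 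Your route is more elementary and self-contained (it does not rely on the black-box characterization from \cite{lipnowski2022}), and it has the side benefit of simultaneously proving \Cref{prp:experiment}\ref{itm:experiment}; the paper's route is shorter because the heavy lifting was already done in setting up \eqref{prg:v'}. The two feasible objects are of course the same outcome distribution, just described at different levels of abstraction.
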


\begin{proof}
Suppose $\prior < \qbar(\pvec)$ and $\chi \ge \uchi ( \prior ; \pvec)$.
Then $\qcav v (\prior ; \pvec) = p_1 - c_1$. 
Consider $(β^*, γ^*, k^*) = (0,  \qbar(\pvec) , 1 - \prior / \qbar(\pvec))$,
which is feasible to the maximization problem \eqref{prg:v'}.
The objective value of \eqref{prg:v'} under $(β^*, γ^*, k^*)$ is $\lambda (\pvec) \prior + p_1 - c_1$.
Therefore, $v^*_{\chi}(\prior ; \pvec) \ge \lambda (\pvec) \prior + p_1 - c_1 > p_1 - c_1 = \qcav v (\prior ; \pvec)$.
\end{proof}

\subsection{Proof of \Cref{prp:experiment}\label{app:prp-experiment}}

Suppose $v^*_{\chi}(\prior; \pvec) > \qc v (\prior ; \pvec)$.
By \Cref{prp:p-eq-value}, we have
$v^*_{\chi}(\prior; \pvec) = \cav v (\prior ; \pvec)$ and $\prior < \qbar(\pvec)$.
To achieve the value of $\cav v (\prior ; \pvec)$ for Expert, the  
outcome distribution $\PP^*$ must satisfy
$$
\supp \PP^* = \set{ (q_1^* , v(q_1^* ; \pvec)), (q_2^* , v(q_2^* ; \pvec))}
\text{where $q_1^* = 0$ and $q_2^* = \qbar(\pvec)$.}
$$
This condition and the Bayesian-splitting constraint
determine the outcome distribution, resulting in a unique Expert-optimal $\pvec$-equilibrium distribution.

Denote by $m_1$ and $m_2$ the two distinct messages that induce posteriors $q_1^*$ and $q_2^*$ respectively.
Since $q_2^*$ leads to a strictly higher payoff for Expert,
$σ (m_2 | t_1) = σ (m_2 | t_2) = 1$.
Since $q_1^* = 0$, Expert sends message $m_1$ only when
$t = t_1$ and $ξ^* (m_2 \mid t_2) = 1$.
The value of $ξ^* (m_1 \mid t_1)$ can be obtained from Client's belief updating.
Specifically, let $ξ_{11} \equiv ξ^* (m_1 \mid t_1)$. Then
$$
\frac{ \prior }{\prior + (1 - χ) (1 - \prior) + (1 - ξ_{11}) (1 - \prior) χ }  =  \qbar (\pvec)
$$
yields
$$
ξ_{11} = \frac{\qbar(\pvec) - \prior}{χ \qbar(\pvec) (1 - \prior)} = 
\frac{\uchi ( \prior ; \pvec)}{χ}.
$$

\subsection{Proof of \Cref{prp:main}\label{app:prp-main}}

By \Cref{lmm:1},
we focus on $\pvec \in \Pbar \cup P^2$. 
When $\pvec \in \Pbar$, 
Expert's optimal prices are $p_1 = l_1$ and $p_2 = l_1- c_1 + c_2$.
Then his ex ante payoff is $l_1 - c_1$ for all $χ$ and $\prior$.

Suppose $\pvec \in P^2$.
There exist two possible functional forms of $\qbar(\pvec)$ as in \Cref{eq:qbar}, depending on whether
$p_2$ is above $l_2 - \frac{l_2 - l_1}{l_1} p_1$.
Below we show that setting $p_2 < l_2 - \frac{l_2 - l_1}{l_1} p_1$ is never optimal.

\begin{claim}
It is never optimal for Expert to choose $\pvec$ satisfying
$p_2 < l_2 - \frac{l_2 - l_1}{l_1} p_1$.
\end{claim}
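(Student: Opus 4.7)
\bigskip
\noindent\emph{Proof plan.} My plan is to formalize the perturbation argument already sketched in \Cref{sec:sketch}. Suppose toward contradiction that some $\pvec \in P^2$ with $p_2 < l_2 - \tfrac{l_2 - l_1}{l_1}p_1$ (call this Case II) is optimal. The first step is a boundary check: I will show that the strict inequality forces $p_1 < l_1$ and $p_2 < l_2$, so that there is room to raise both prices. For $p_1 = l_1$, the expression $l_2 - \tfrac{l_2 - l_1}{l_1}p_1$ equals $l_1$, which together with $p_2 \ge p_1 = l_1$ contradicts $p_2 < l_1$; and viewing $p_1 \mapsto l_2 - \tfrac{l_2 - l_1}{l_1}p_1$ as a function on $[c_1, l_1]$, its maximum is $l_2$ (attained at $p_1 = 0$), so $p_2 < l_2$.

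Next I will introduce the perturbation $\pvec' = (p_1 + \varepsilon, p_2 + \varepsilon)$ and pick $\varepsilon > 0$ small enough so that (a) $\pvec' \in P$, which is possible by the boundary check, (b) $\pvec'$ is still in Case II, i.e. $p_2 + \varepsilon < l_2 - \tfrac{l_2 - l_1}{l_1}(p_1 + \varepsilon)$, which holds whenever $\varepsilon < \tfrac{l_1}{l_2}\bigl(l_2 - \tfrac{l_2-l_1}{l_1}p_1 - p_2\bigr)$, and (c) $\pvec' \in P^2$, which is automatic since the margin gap $p_2' - c_2 - (p_1' - c_1) = p_2 - c_2 - (p_1 - c_1) > 0$ is preserved.

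The heart of the argument is that every object defining $v^*_\chi(\cdot; \pvec)$ in the Case II formula depends only on the difference $p_2 - p_1$ and on the margin gap, both of which are invariant under the perturbation. Concretely, $\qbar(\pvec') = \tfrac{p_2' - p_1'}{l_2} = \qbar(\pvec)$; hence $\lambda(\pvec') = \lambda(\pvec)$, $\uchi(\prior; \pvec') = \uchi(\prior; \pvec)$, and $\qlbar(\chi; \pvec') = \qlbar(\chi; \pvec)$. Plugging into \eqref{eq:p-eq-value}, each of the three branches gains exactly $\varepsilon$: the top branch from $p_2' - c_2$, the middle from $p_1' - c_1$ (with $\lambda(\pvec')\prior$ unchanged), and the bottom from $p_1' - c_1$. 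Therefore $v^*_\chi(\prior; \pvec') = v^*_\chi(\prior; \pvec) + \varepsilon > v^*_\chi(\prior; \pvec)$ for every $\prior \in (0,1)$, contradicting optimality of $\pvec$. There is no real obstacle here; the only care needed is the boundary check and the explicit smallness bound on $\varepsilon$ that keeps $\pvec'$ inside Case II, both of which are elementary.
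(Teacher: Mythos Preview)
Your proposal is correct and follows exactly the same perturbation argument as the paper: show $p_1<l_1$ and $p_2<l_2$, then increase both prices by a small $\varepsilon$ so that $\qbar(\pvec)=(p_2-p_1)/l_2$ and hence $\qlbar(\chi;\pvec)$ and $\lambda(\pvec)$ are unchanged, making $v^*_\chi(\prior;\pvec)$ strictly larger in every branch of \eqref{eq:p-eq-value}. You have simply made explicit the feasibility checks (that $\pvec'$ stays in $P$, in $P^2$, and in Case~II) which the paper leaves implicit.
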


\begin{proof}
Assume $p_2 < l_2 - \frac{l_2 - l_1}{l_1} p_1$, and then $\qbar (\pvec) = \frac{p_2 - p_1}{l_2}$.
Condition $p_2 < l_2 - \frac{l_2 - l_1}{l_1} p_1$  also implies 
$p_1 < l_1$ and $p_2 < l_2$.
Consider that Expert increases both $p_1$ and $p_2$ by some arbitrarily small $ε > 0$.
By doing that, $\qbar  (\pvec)$ and $\qlbar (χ ; \pvec)$ are unaltered and
$v^*_{\chi}(μ_0 ; \pvec)$ strictly increases for all $\prior \in (0,1)$.
\end{proof}

From now on, suppose $p_2 \ge l_2 - \frac{l_2 - l_1}{l_1} p_1$. 
Then $\qbar (\pvec)= \frac{\ps - \lm}{\ls - \lm}$, which is independent of $p_1$.
Furthermore, $\qlbar (χ, \pvec) = \frac{\qbar (\pvec) (1 - χ)}{ 1 - χ \qbar (\pvec)}$.
Expert's $\pvec$-equilibrium value can be written as:
\begin{equation}\label{eq:optm-p-value}
  v^*_{\chi}(\prior; \pvec) = \begin{cases}
    p_2 - c_2 & 
      \textif \prior > \qbar (\pvec) \\
    \lambda(\pvec) \prior + p_1 - c_1  & 
      \textif \prior \in [\qlbar (χ ; \pvec) , \qbar(\pvec) ]   \\
    p_1 - c_1  & 
      \textif  \prior < \qlbar (χ ; \pvec)
  \end{cases}  
\end{equation}
where $\lambda(\pvec) = \frac{p_2 - c_2 - (p_1 - c_1)}{\qbar(\pvec)}$.
Below we show that it is never optimal for Expert to set $p_1 < l_1$.

\begin{claim}\label{clm:xiong}
  For all $\pvec \in \Pbar \cup P^2$ satisfying $p_1 < l_1$ and $p_2 \ge l_2 - \frac{l_2 - l_1}{l_1} p_1$, 
  there exists  $\pvec' \in P$ such that $v_χ^* (\prior ; \pvec') > v_χ^* (\prior ; \pvec)$.
\end{claim}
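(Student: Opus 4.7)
The plan is to exhibit, for each such $\pvec$, a concrete $\pvec' \in P$ with strictly higher equilibrium value. I always take $p_1' = l_1$ and possibly perturb $p_2$ upward, splitting into two regimes depending on whether $p_2 - c_2 \ge l_1 - c_1$. This distinction determines whether $(l_1, p_2)$ itself lies in $P^2 \cup \Pbar$ (so that formula \eqref{eq:optm-p-value} applies directly) or falls into $P^1$ (in which case \Cref{lmm:1} forces us to retreat to the equal-margin corner).

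In the low-margin case $p_2 - c_2 < l_1 - c_1$, which subsumes $\pvec \in \Pbar$ with $p_1 < l_1$ (as then $p_2 - c_2 = p_1 - c_1 < l_1 - c_1$), the concave-envelope upper bound gives $v^*_\chi(\prior;\pvec) \le \cav v(\prior;\pvec) \le p_2 - c_2 < l_1 - c_1$. The equal-margin choice $\pvec' = (l_1, l_1 - c_1 + c_2) \in \Pbar$ has value $l_1 - c_1$ and so strictly dominates.

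For the main case $p_2 - c_2 \ge l_1 - c_1$, I take $\pvec' = (l_1, p_2 + \delta)$ for small $\delta \ge 0$. Since $p_2 \ge l_1 - c_1 + c_2 > l_1$, this $\pvec'$ stays in $P^2 \cup \Pbar$ and in the first branch of \eqref{eq:qbar}, so $\qbar(\pvec')$ and $\qlbar(\chi;\pvec')$ are independent of $p_1'$. When $\prior < \qbar(\pvec)$, taking $\delta = 0$ preserves both cutoffs, and \eqref{eq:optm-p-value} yields a gain of $l_1 - p_1$ if $\prior < \qlbar$, or $(l_1 - p_1)(1 - \prior/\qbar(\pvec))$ if $\prior \in [\qlbar, \qbar)$; both are strictly positive. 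When $\prior > \qbar(\pvec)$, choosing $\delta > 0$ sufficiently small keeps $\prior > \qbar(\pvec')$, so the value jumps from $p_2 - c_2$ to $p_2 + \delta - c_2$.

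The main obstacle is the boundary subcase $\prior = \qbar(\pvec)$: the kink value $p_2 - c_2$ is insensitive to $p_1$, so no pure $p_1$-perturbation helps. I fix this by again using $\pvec' = (l_1, p_2 + \delta)$ with $\delta > 0$ small, which now pushes $\prior$ into the sloping regime of $\pvec'$. Writing $B = p_2 - l_1$ and $A' = p_2 - c_2 - l_1 + c_1$, a short manipulation of $\lambda(\pvec')\prior + l_1 - c_1$ gives
\[
  v^*_\chi(\prior;\pvec') - v^*_\chi(\prior;\pvec) \;=\; \frac{\delta (B - A')}{B + \delta} \;=\; \frac{\delta (c_2 - c_1)}{p_2 + \delta - l_1} \;>\; 0,
\]
with strict positivity powered by assumption (ii), $c_2 > c_1$. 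Feasibility of $\pvec'$ in every subcase (namely $p_2 + \delta \in [l_1, l_2]$) is routine, using $\prior < 1$ and $p_2 > l_1$.
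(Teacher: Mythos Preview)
Your proof is correct and lands on the same key computation as the paper at the boundary $\prior = \qbar(\pvec)$, but your case split is organized differently. The paper splits by the position of $\prior$ relative to $\qbar(\pvec)$ and uses \emph{local} perturbations ($p_2 \mapsto p_2 + \varepsilon$ when $\prior > \qbar$, $p_1 \mapsto p_1 + \varepsilon$ when $\prior < \qbar$), only jumping to $p_1' = l_1$ at the kink. You instead split by whether $p_2 - c_2 \gtrless l_1 - c_1$ and jump straight to $p_1' = l_1$ in every subcase. Your organization has the pleasant side effect of absorbing the paper's subcase ``$l_1 - c_1 \ge p_2 + \varepsilon - c_2$'' at the boundary into your low-margin case, so you never have to argue it separately.

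One detail you glide over: in the boundary subcase $\prior = \qbar(\pvec)$ you assert that $\pvec' = (l_1, p_2 + \delta)$ ``pushes $\prior$ into the sloping regime,'' i.e.\ that $\qlbar(\chi;\pvec') \le \prior$. This is true for $\delta$ small because $\qlbar(\chi;\pvec) < \qbar(\pvec) = \prior$ (using $\chi > 0$) and $\qlbar$ is continuous in $p_2$, but you should state it; the paper does so explicitly. Without it, your formula $\lambda(\pvec')\prior + l_1 - c_1$ for $v^*_\chi(\prior;\pvec')$ is not yet justified.
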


\begin{proof}
  Fix some $ \pvec = ( p_1,  p_2) \in \Pbar \cup P^2$ satisfying $ p_1 < l_1$ and $ p_2 \ge l_2 - \frac{l_2 - l_1}{l_1}  p_1$.
  We want to find some $\pvec'$ such that $v_χ^* (\prior ; \pvec') > v_χ^* (\prior ;  \pvec)$.
  
  When $\prior > \qbar( \pvec)$, consider $\pvec' = ( p_1,  p_2 + ε)$ 
  where the positive $ε$ is sufficiently small such that $\prior > \qbar (\pvec')$.
  By \Cref{eq:optm-p-value},
  $v_χ^* (\prior ; \pvec') =  p_2 + ε - c_2 >  p_2 - c_2 = v_χ^* (\prior ; \pvec)$.
  When $\prior < \qbar(\pvec)$, consider $\pvec' = (p_1 + ε , p_2 )$ for some small $ε > 0$.
  Then $\qbar (\pvec') = \qbar(\pvec)$ and $\qlbar (\pvec', χ) = \qlbar(\pvec, χ)$.
  It follows that
  $v_χ^* (\prior ; \pvec') > v_χ^* (\prior ; \pvec)$.

  Finally, suppose $\prior = \qbar(\pvec)$ and then $v_χ^* (\prior ; \pvec) = p_2 - c_2$. 
  Let $\pvec' = (l_1 , p_2 + ε)$ for some $ε > 0$.
  When $l_1 - c_1 \ge p_2 + ε - c_2$, Expert's $\pvec'$-equilibrium value is at least
  $p_2 + ε - c_2$. 
  When $l_1 - c_1 < p_2 + ε - c_2$,
  we have $\qbar (\pvec') > \qbar (\pvec) = \prior$, and 
  $\qlbar (\pvec' , χ) < \prior$ as long as $ε$ is sufficiently small.
  Then 
  \begin{equation*}
    \begin{split}
      v_χ^* (\prior ; \pvec') - v_χ^* (\prior ; \pvec)
      & = λ(\pvec') \qbar(\pvec) + l_1 - c_1 - (p_2 - c_2) \\
      & = (p_2 - c_1) - (c_2 - c_1) \big[\frac{p_2 - l_1}{p_2 + ε - l_1}\big] - (p_2 - c_2)  \\
      & = (c_2 - c_1) \frac{ε}{p_2 + ε - l_1} > 0.
    \end{split}  
  \end{equation*}
\end{proof}

Let $χ^* (\prior) = \frac{c_2 - c_1 - \prior (l_2 - l_1)}{ (c_2 - c_1) (1 - \prior)}$.
Below we show that Expert cannot obtain a payoff strictly higher than $l_1 - c_1$ when
$χ \le χ^* (\prior)$ and $\prior \le \frac{c_2 - c_1}{l_2 - l_1}$.

\begin{claim}
  If $\prior \le \frac{c_2 - c_1}{l_2 - l_1}$ and $χ \le χ^* (\prior)$,
  then $\ev_χ(\prior) = l_1 - c_1$.
\end{claim}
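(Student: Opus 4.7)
The inequality $\ev^*_\chi(\prior) \ge l_1 - c_1$ is immediate: the equal-margin price list $(l_1, l_1 - c_1 + c_2) \in \Pbar$ yields $v^*_\chi(\prior;\pvec) = l_1 - c_1$ as noted at the start of the proof. The task is therefore to establish the reverse inequality $v^*_\chi(\prior;\pvec) \le l_1 - c_1$ for every $\pvec \in \Pbar \cup P^2$; by \Cref{lmm:1}, this suffices. For $\pvec \in \Pbar$ one has $v^*_\chi(\prior;\pvec) = p_2 - c_2 = p_1 - c_1 \le l_1 - c_1$ directly.

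For $\pvec \in P^2$, the preceding claim (\ref{clm:xiong}) lets me restrict attention to $p_1 = l_1$, in which case $p_2 \ge l_2 - \tfrac{l_2 - l_1}{l_1} p_1 = l_1$ is automatic and $\qbar(\pvec) = \tfrac{p_2 - l_1}{l_2 - l_1}$. Moreover, $\pvec \in P^2$ forces $p_2 > l_1 - c_1 + c_2$, hence
\[
\qbar(\pvec) \,>\, \frac{c_2 - c_1}{l_2 - l_1} \,\ge\, \prior,
\]
so in particular $\prior < \qbar(\pvec)$, and the case $\prior \ge \qbar(\pvec)$ (which would yield $p_2 - c_2$) cannot occur under the hypothesis $\prior \le \tfrac{c_2 - c_1}{l_2 - l_1}$. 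Thus \Cref{prp:p-eq-value} applies, and I need only show $\chi < \uchi(\prior;\pvec)$, which then gives $v^*_\chi(\prior;\pvec) = \qc v(\prior;\pvec) = l_1 - c_1$.

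The core step is the strict inequality $\uchi(\prior;\pvec) > \chi^*(\prior)$. Rewriting
\[
\uchi(\prior;\pvec) \;=\; \frac{1 - \prior/\qbar(\pvec)}{1 - \prior},
\]
and using $\qbar(\pvec) > \tfrac{c_2 - c_1}{l_2 - l_1}$ to bound $\prior/\qbar(\pvec) < \prior \cdot \tfrac{l_2 - l_1}{c_2 - c_1}$, I obtain
\[
\uchi(\prior;\pvec) \;>\; \frac{1 - \prior \cdot \frac{l_2 - l_1}{c_2 - c_1}}{1 - \prior} \;=\; \frac{(c_2 - c_1) - \prior(l_2 - l_1)}{(c_2 - c_1)(1 - \prior)} \;=\; \chi^*(\prior).
\]
Combined with $\chi \le \chi^*(\prior)$, this yields $\chi < \uchi(\prior;\pvec)$, so $v^*_\chi(\prior;\pvec) = l_1 - c_1$ by \Cref{prp:p-eq-value}.

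The only potential obstacle is keeping track of strictness. What makes things work is that the defining inequality $p_2 - c_2 > p_1 - c_1$ for $P^2$ is strict, which propagates to $\qbar(\pvec) > \tfrac{c_2 - c_1}{l_2 - l_1}$ strictly, and hence to the strict separation $\uchi(\prior;\pvec) > \chi^*(\prior)$. Boundary price lists with $p_2 = l_1 - c_1 + c_2$ belong to $\Pbar$ and are covered by the trivial bound above, so no continuity issue arises. Taking the supremum over $\pvec \in \Pbar \cup P^2$ completes the proof.
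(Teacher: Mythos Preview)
Your proof is correct and takes essentially the same route as the paper: restrict to $p_1 = l_1$ via the preceding claims, then establish $\uchi(\prior; l_1, p_2) > \chi^*(\prior)$ for every $p_2 > l_1 - c_1 + c_2$ in order to conclude $v^*_\chi = \qc v = l_1 - c_1$ by \Cref{prp:p-eq-value}. The paper phrases the key inequality as monotonicity of $f(p_2) \equiv \uchi(\prior; l_1, p_2)$ together with $f(l_1 - c_1 + c_2) = \chi^*(\prior)$, which is equivalent to your direct bound via $\qbar(\pvec) > \tfrac{c_2 - c_1}{l_2 - l_1}$.
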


\begin{proof}
  We focus on price lists satisfying 
  $p_1 = l_1$ (by \Cref{clm:xiong}) and $\pvec \in \Pbar \cup P^2$.  
  It follows that $p_2 \in [l_1 - c_1 + c_2 , l_2]$ and $\qc v (\prior ; \pvec) = l_1 - c_1$. 
  Let 
  \[ f(p_2) \equiv
  \uchi(\prior ; l_1, p_2) = \frac{\qbar(l_1, p_2) - \prior}{\qbar (l_1, p_2)(1-\prior)} =
  \frac{ (p_2 - l_1) - \prior  (l_2 - l_1) }{ (p_2 - l_1) (1 - \prior)}.
  \]
  Since $f(\wcard)$ is strictly increasing over $[l_1 - c_1 + c_2 , l_2]$,
  we have $\uchi (\prior ; l_1, p_2) > f(l_1 - c_1 + c_2) = \frac{c_2 - c_1 - \prior (l_2 - l_1)}{ (c_2 - c_1) (1 - \prior)} = χ^*(\prior)$ for all
  $p_2 > l_1 - c_1 + c_2$.
  In this case,
  $χ \le χ^* (\prior) < \uchi (\prior ; l_1, p_2)$ and thus
  $v_χ^* (\prior ; l_1, p_2) = \qc v(\prior ; l_1, p_2) = l_1 - c_1$. 
  When $p_2 = l_1 - c_1 + c_2$,
  Expert's $\pvec$-equilibrium value remains $l_1 - c_1$. 
  To sum up, $v^*_χ (\prior; l_1, p_2) = l_1 - c_1$ for all $p_2 \in [l_1 - c_1 + c_2, l_2]$.
\end{proof}

We solve for the optimal price list when $χ > \max \{χ^*(\prior) , 0\}$.

\begin{claim}
  Given $p_1 = l_1$, there exists a unique optimal $p_2^*$ determined by $\uchi (\prior ; l_1, p_2^*) = χ$ 
  when $χ > \max \{χ^*(\prior) , 0\}$.
\end{claim}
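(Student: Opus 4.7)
The plan is to show that, with $p_1 = l_1$ fixed, the map $p_2 \mapsto v_\chi^*(\prior ; l_1, p_2)$ is strictly increasing on $[l_1 - c_1 + c_2, p_2^*]$ and equals the lower constant $l_1 - c_1$ on $(p_2^*, l_2]$, so that $p_2^*$ is the unique optimum. Under the standing hypothesis $p_2 \ge l_2 - \tfrac{l_2 - l_1}{l_1} p_1$ carried over from the preceding claim, setting $p_1 = l_1$ gives $\qbar(l_1, p_2) = (p_2 - l_1)/(l_2 - l_1)$, which is strictly increasing in $p_2$ on $[l_1 - c_1 + c_2, l_2]$. Since $\uchi(\prior; \pvec) = (1 - \prior/\qbar)/(1-\prior)$ is strictly increasing in $\qbar$ for $\qbar > \prior$, the composition $p_2 \mapsto \uchi(\prior; l_1, p_2)$ is strictly increasing and continuous. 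Direct substitution at the endpoints yields $\uchi(\prior; l_1, l_1 - c_1 + c_2) = \chi^*(\prior)$ and $\uchi(\prior; l_1, l_2) = 1$, so for every $\chi \in (\max\{\chi^*(\prior),0\}, 1]$ the intermediate value theorem produces a unique $p_2^* \in (l_1 - c_1 + c_2, l_2]$ with $\uchi(\prior; l_1, p_2^*) = \chi$.

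Next I would apply the formula \eqref{eq:p-eq-value} to compute $v_\chi^*(\prior; l_1, p_2)$ on three natural subintervals. For $p_2 > p_2^*$ we have $\chi < \uchi(\prior; l_1, p_2)$ and hence $v_\chi^*(\prior; l_1, p_2) = l_1 - c_1$. For $p_2 \in [\bar p_2, p_2^*]$, where $\bar p_2 := l_1 + \prior(l_2 - l_1)$ solves $\qbar(l_1, \bar p_2) = \prior$ (intersected with the feasible range), the value is $\lambda(l_1, p_2)\,\prior + l_1 - c_1$; substituting for $\qbar$ gives $\lambda(l_1, p_2) = (l_2 - l_1)\bigl(1 - (c_2 - c_1)/(p_2 - l_1)\bigr)$, which is strictly increasing in $p_2$. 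For $p_2 \in [l_1 - c_1 + c_2, \bar p_2]$ (nonempty only when $\prior \ge (c_2 - c_1)/(l_2 - l_1)$), the value is $p_2 - c_2$, also strictly increasing. A direct check shows continuity at $\bar p_2$, where both expressions equal $\bar p_2 - c_2$.

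Piecing these together, $v_\chi^*(\prior; l_1, \cdot)$ is continuous and strictly increasing on $[l_1 - c_1 + c_2, p_2^*]$, attains $\lambda(l_1, p_2^*)\,\prior + l_1 - c_1 > l_1 - c_1$ at $p_2^*$ (since $p_2^* > l_1 - c_1 + c_2$ forces $\lambda(l_1, p_2^*) > 0$), and drops to the strictly lower value $l_1 - c_1$ for all $p_2 > p_2^*$. Hence $p_2^*$ is the unique maximizer on $[l_1 - c_1 + c_2, l_2]$. The main subtlety to flag is the jump discontinuity at $p_2 = p_2^*$: because \Cref{prp:p-eq-value} places the boundary case $\chi = \uchi(\prior; \pvec)$ inside the persuasion regime, the supremum is actually attained (not merely approached from below) at $p_2^*$, which is what secures existence of an optimum alongside its uniqueness.
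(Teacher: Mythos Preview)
Your proof is correct and follows essentially the same route as the paper's: establish that $p_2 \mapsto \uchi(\prior; l_1, p_2)$ is strictly increasing with the right endpoint values to pin down $p_2^*$, then split $[l_1 - c_1 + c_2, l_2]$ into the regions $p_2 \le p_2^*$ (where $v_\chi^* = \cav v$ is strictly increasing in $p_2$) and $p_2 > p_2^*$ (where $v_\chi^* = l_1 - c_1$). Your explicit remark that the boundary case $\chi = \uchi(\prior; \pvec)$ falls inside the persuasion regime, so the maximum is attained and not just approached, is a point the paper leaves implicit.
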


\begin{proof}
Suppose $χ > \max \{χ^*(\prior) , 0\}$. Then there exists $p_2 \in (l_1 - c_1 + c_2, l_2]$ such that 
$χ \ge \uchi (\prior ; l_1, p_2)$ and
$v_χ^* (\prior ; l_1, p_2) = \cav v (\prior ; l_1, p_2) > l_1 - c_1$.
Since $\uchi (\prior ; l_1, p_2)$ is strictly increasing in $p_2$,
there exists a unique $p_2^* = l_1 + \frac{\prior (l_2 - l_1)}{1 - χ + χ \prior}$ such that $\uchi (\prior ; l_1, p_2^*) = χ$.
We verify that $v^*_{\chi}(\prior ; l_1 , p_2)$ achieves its maximum at $p_2 = p_2^*$.

Suppose $p_2 \in [l_1 - c_1 + c_2 , p_2^*]$,
we have $v^*_{\chi}(\prior; l_1, p_2) = \cav v (\prior ; l_1, p_2)$. 
The expression of $\cav v (\prior ; l_1, p_2)$ can be written out as below.
When $\prior > \frac{c_2 - c_1}{l_2 - l_1}$, we have
\begin{equation*}
  \cav v (\prior ; l_1 , p_2) = 
  \begin{cases}
    p_2 - c_2 & \textif p_2 \in [l_1 - c_1 + c_2, \bar {p}_2] \\
    l_1 - c_1 + (l_2 - l_1) \prior - \frac{(c_2 - c_1) (l_2 - l_1)}{p_2 - l_1} \prior & \textif p_2 \in (\bar {p}_2, p_2^*]
  \end{cases}
\end{equation*}
where $\bar{p}_2 = \prior l_2 + (1 - \prior) l_1$ is determined by $\qbar (l_1, \bar {p}_2) = \prior$;
When $\prior \leq \frac{c_2 - c_1}{l_2 - l_1}$, we have
$\cav v (\prior ; l_1 , p_2) = 
l_1 - c_1 + (l_2 - l_1) \prior - \frac{(c_2 - c_1) (l_2 - l_1)}{p_2 - l_1} \prior$.
In both cases, $\cav v (\prior ; l_1 , p_2)$ is strictly increasing in $p_2$ over $[l_1 - c_1 + c_2, p_2^*]$.
So $v^*_{\chi}(\prior; l_1, p_2) < v^*_{\chi}(\prior; l_1, p_2^*)$ for all $p_2 \in [l_1 - c_1 + c_2 , p_2^*)$.

Suppose $p_2 > p_2^*$, we have $v^*_{\chi}(\prior; \pvec) = \qcav v (\prior ; \pvec)$.
Since $\qbar (l_1, p_2) > \qbar (l_1, p_2^*) = \frac{\prior}{1 - χ + χ\prior} > \prior$,
we have $\qcav v (\prior ; \pvec) = l_1 - c_1$.
Therefore, $v^*_{\chi}(\prior; l_1, p_2) < v^*_{\chi}(\prior; l_1, p_2^*)$ for all $p_2 \in (p_2^* , l_2]$.
\end{proof}

Correspondingly, Expert's equilibrium value given $χ > \max \{χ^*(\prior) , 0\}$ is
$\ev (\prior) = \cav v(\prior ; l_1, p_2^*) = [(l_2 - l_1) - \chi (c_2 - c_1)]\prior + (l_1 - c_1) - (1 - \chi) (c_2 - c_1)$.

\subsection{Proof of \Cref{prp:client-values}} \label{app:prp-client-values}

Fixing some pair $(\prior , χ)$ satisfying $0 < \prior \le \frac{c_2 - c_1}{l_2 - l_1}$ and
$0 < χ \le χ^*(\prior)$, then
$\ev_χ^* (\prior) = \qcav v (\prior ; p^*) = l_1 - c_1$.
Consider an optimal price list $p^* = (l_1, l_1 - c_1 + c_2)$ and
focus on those equilibria in which $σ = \xi$.
Let $\set{q_1, q_2}$ denote the collection of Client's possible priors in an $p^*$-equilibrium.
It follows that for any $q_2 \in [\frac{c_2 - c_1}{l_2 - l_1}, 1]$, there exists an 
Expert-optimal $p^*$-equilibrium such that Client's possible priors are $\set{0, q_2}$.
Among these equilibria, as $q_2$ increases from   $\frac{c_2 - c_1}{l_2 - l_1}$ to $1$,
Client's equilibrium value increases continuously from
$0$ to $S(\prior) - \ev^*_χ (\prior)$.  

\subsection{Proof of \Cref{prp:welfare}} \label{app:prp-welfare}

Fixing some $\pvec \in P^2$,
Expert's $\pvec$-equilibrium value for this case
has been analyzed in \Cref{sec:p-equil}:
\[ 
v_\chi^* (\prior; \pvec) = 
\begin{cases}
  p_1 - c_1 & \text{ if } \prior < \qlbar(χ ; \pvec)  ; \\ 
  \cav v (\prior; \pvec) & \text{ if } \qlbar(χ ; \pvec) \leq \prior < \qbar(\pvec); \\
  p_2 - c_2 & \text{ if } \prior \geq \qbar(\pvec).
\end{cases}
\]
  When $\prior \ge \qbar(\pvec)$, Client always chooses $a_2$ in an Expert-optimal $\pvec$-equilibrium and
  $v_\chi^* (\prior; \pvec) = \prior l_2 + (1-\prior) l_1 - p_2$.
  When $\qlbar(χ ; \pvec) \le \prior < \qbar(\pvec)$, there exists a unique Expert-optimal $\pvec$-equilibrium
  outcome distribution, whose support is 
  $\set{(0, p_1 - c_1), (\qbar(\pvec), p_2 - c_2)}$.
  At posterior $0$ (resp.\ $\qbar(\pvec)$), Client always chooses 
  $a_1$ (resp.\ $a_2$).
  Therefore,
  $u_χ^*(\prior; \pvec) = (1 - \frac{\prior}{\qbar(\pvec)}) (l_1 - p_1) + 
  \frac{\prior}{\qbar(\pvec)} [\qbar(\pvec) l_2 + (1-\qbar(\pvec)) l_1 - p_2]$.
  When $\prior < \qlbar(χ ; \pvec)$, multiple Expert-optimal $\pvec$-equilibria exist.
  Among them, Client's equilibrium value is maximized 
  when the outcomes are 
  $(0, p_1 - c_1)$ and $(\qbar(\pvec), p_2 - c_2)$.
  As a result, Client's highest equilibrium value is still 
  $(1 - \frac{\prior}{\qbar(\pvec)}) (l_1 - p_1) + 
  \frac{\prior}{\qbar(\pvec)} [\qbar(\pvec) l_2 + (1-\qbar(\pvec)) l_1 - p_2]$.

Then, fix a price list $\pvec \in P^1$.
The concave and quasiconcave envelopes of $v (\prior ; \pvec)$ are:
\begin{equation}\label{eq:qcav-v-2}
  \qc v (\prior ; \pvec) = \begin{cases}
    p_1 - c_1 & \text{ if } \prior \in [0,\tilde q (\pvec)];  \\
    p_2 - c_2 & \text{ if } \prior \in (\tilde q (\pvec), 1],
  \end{cases}    
\end{equation}
\begin{equation}\label{eq:cav-v-2}
  \cav v (\prior ; \pvec) = \begin{cases}
    p_1 - c_1 & \text{ if } \prior \in [0,\tilde q (\pvec)];  \\
    \tilde \lambda(\pvec) (1 - \prior)  + p_2 - c_2 & \text{ if } \prior \in (\tilde q (\pvec), 1],
  \end{cases}    
\end{equation}
where the cutoff $\tilde q (\pvec)$ is
\begin{equation}
\tilde q (\pvec) = \begin{cases}
  \frac{l_1 - p_1}{l_1} & \text{if } p_2 \ge l_2 - \frac{l_2 - l_1}{l_1} p_1 \\
  \frac{p_2 - p_1}{l_2} & \text{otherwise.}
\end{cases}\label{eq:qtilde}
\end{equation}
and $\tilde \lambda(\pvec) = \frac{(p_1 - c_1) - (p_2 - c_2)}{1 - \tilde q (\pvec)}$.

Expert's $\pvec$-equilibrium value can be obtained
using the same procedures as in \Cref{sec:p-equil}:
\[ 
v_\chi^* (\prior; \pvec) = 
\begin{cases}
  p_1 - c_1 & \text{ if } \prior \le \tilde q(\pvec)  ; \\ 
  \cav v (\prior; \pvec) & \text{ if } \tilde q(\pvec) < \prior \le \hat q (\chi; \pvec); \\
  p_2 - c_2 & \text{ if } \prior > \hat q (\chi; \pvec).
\end{cases}
\]
where $\hat q  (\chi; \pvec) = \frac{\tilde q(\pvec)}{1 - (1-\tilde q(\pvec))\chi}$.
The characterization of $u_χ^*(\prior; \pvec)$ is obtained using the same argument for the previous case:
$$
u_χ^*(\prior; \pvec) = \begin{cases}
  (1 - \prior) l_1 - p_1 & \text{ if } \prior \in [0,\tilde q (\pvec)]; \\
  \frac{1 - \prior}{1 - \tilde q (\pvec)} \big ((1 - \tilde q (\pvec)) l_1 - p_1 \big) + \frac{\prior - \tilde q (\pvec)}{1 - \tilde q (\pvec)} (l_2 - p_2) & \text{ if } \prior \in (\tilde q (\pvec), 1].
\end{cases}
$$

Finally, fixing some $\pvec \in \Pbar$,
Client achieves the highest equilibrium value when Expert discloses all information to her. So
\[
u_χ^*(\prior; \pvec) = \prior (l_2 - p_2) + (1 - \prior) (l_1 - p_1).
\]

\subsection{Proof of \Cref{cor:discontinuity}}\label{app:cor-discontinuity}

Fix some $\pbar \in \Pbar$. Consider the limiting difference between 
$u_χ^*(\prior; \pbar)$ and $u_χ^* | _{P^1} (\prior; \pvec)$ as $\pvec$ converges to 
$\pbar$ from all possible directions in the region $P^1$.
By \Cref{prp:welfare},
$u_χ^* | _{P^1} (\prior ; \pvec)$ will converge to either
$(1 - \prior) l_1 - \bar{p}_1$ or  
$\frac{1 - \prior}{1 - \tilde{q}(\pvec) } \big ((1 - \tilde{q}(\pvec)) l_1 - p_1 \big) + \frac{\prior - \tilde{q}(\pvec) }{1 - \tilde{q}(\pvec) } (l_2 - p_2)$ as $\pvec  \to \pbar$.
In the first case, the limiting difference 
is $\prior (l_2 + p_1 - p_2) > 0$.
In the second case, let
$g (\qtilde) \equiv \frac{1 - \prior}{1 - \tilde q } \big ((1 - \tilde q ) l_1 - p_1 \big) + \frac{\prior - \tilde q}{1 - \tilde q } (l_2 - p_2)$
and then
$u_χ^*(\prior; \pbar) = g (0)$.
Since $g'(\qtilde) < 0$ for $\qtilde \in (0,1)$, the limiting difference
$u_χ^*(\prior; \pbar) -  g(\qtilde(\pbar))
= g(0) -  g(\qtilde(\pbar))$ is positive.

Similarly, 
$u_χ^* | _{P^2} (\prior ; \pvec)$ converges to either
$\prior l_2 + (1 - \prior) l_1 - \bar{p}_2$
or 
$(1 - \frac{\prior}{\qbar}) (l_1 - \bar{p}_1) + 
\frac{\prior}{\qbar} [\qbar l_2 + (1-\qbar) l_1 - \bar{p}_2]$ as $\pvec  \to \pbar$.
In the first case, the limiting difference is $(1 - \prior) (\bar{p}_2 - \bar{p}_1) > 0$. In the second case, let
$h (\qbar) \equiv (1 - \frac{\prior}{\qbar}) (l_1 - \bar{p}_1) + 
\frac{\prior}{\qbar} [\qbar l_2 + (1-\qbar) l_1 - \bar{p}_2]$ and then
$u_χ^*(\prior; \pbar) = h (1)$.
Since $h'(\qbar) > 0$ for $\qbar \in (0,1)$, the limiting difference
$u_χ^*(\prior; \pbar) - h(\qbar(\pbar)) = h(1) - h(\qbar(\pbar))$ is positive.

\subsection{Unknown problem type\label{app:unknown-type}}

This section contains discussions that are omitted from \Cref{sec:discussion-type}. 

Consider the \textit{alternative game} $\cG^A$ which differs from
$\cG$ only in that Expert cannot observe the problem directly.
Denote by $\cG^A_{\pvec}$ the subgame following Expert posting some price list $\pvec$ in $\cG^A$.
We refer to an equilibrium of
$\cG_{\pvec}^A$ as a \textit{$\pvec$-equilibrium of $\cG^A$}.
A $\pvec$-equilibrium of $\cG^A$ consists of four maps:
Expert's chosen experiment: $ξ^A: T \to Δ M$;
Expert's signalling strategy $σ^A: M \to Δ M$;
Client's strategy $ρ^A : M \to Δ A$;
and Client's belief updating
$η^A : M \to Δ T$; such that 
\begin{enumerate}[label=A\arabic*]
\item\label{itm:belief} 
  Given the experiment $ξ^A$
  and signalling strategy $σ^A$, Client's posterior after receiving message $m \in \M$ is
  obtained using Bayes's rule:
  $$
  \eta^A(\type \mid \m) = \frac
      {\Prior(t) [\chi ξ^A (m \mid t) + (1 - \chi) \sum\limits_{m' \in M}  σ^A (m \mid m') ξ^A (m' \mid t) ]}
      {\sum\limits_{t' \in T} \Prior (t') [\chi ξ^A(m \mid t') + (1 - \chi) \sum\limits_{m' \in M}  σ^A (m \mid m') ξ^A (m' \mid t)] } \, \text{ for $t \in T$;}
  $$

\item
  Given the belief updating $\eta^A$, Client's strategy
  $\rho^A (\wcard \mid \m)$ after receiving message $m$ is supported on 
  $$
  \arg\max_{a \in A} \sum_{\type \in \Type} 
    \eta ^A (\type \mid \m) u^C(a, \type , \pvec);
  $$ 
  
\item\label{itm:signalling} 
  Given Client's strategy $\rho^A$, Expert's signalling strategy
  $σ ^A (\wcard \mid m')$ is supported on 
  $$
  \arg\max_{m \in \M} \sum_{a \in A} \rho ^A (a \mid \m) u^E (a, \pvec)
  $$ 
  for all $m' \in \bigcup\limits_{t \in T} \supp ξ^A(\wcard \mid t)$.
\end{enumerate}

The $\pvec$-equilibrium of $\cG^A$ differs from that of $\cG$ (\Cref{def:p-eq})
in Client's belief updating (\ref{itm:belief}) and in Expert's choice of signalling strategy (\ref{itm:signalling}).

\begin{claim}\label{clm:alternative-game}
If $(ξ^A, σ^A, ρ^A, η^A)$ 
is a $\pvec$-equilibrium of $\cG^A$,
then $(ξ^A, \tilde{σ}, ρ^A, η^A)$
is a $\pvec$-equilibrium of $\cG$
where
$\tilde{σ} (m \mid t) = \sum_{m' \in M}  σ^A (m \mid m') ξ^A (m' \mid t)$. 
\end{claim}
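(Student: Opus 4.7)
The plan is to verify directly that the tuple $(ξ^A, \tilde{σ}, ρ^A, η^A)$ satisfies the three conditions of a $\pvec$-equilibrium of $\cG$ as listed in \Cref{def:p-eq}. Conditions on Client's behavior transfer with essentially no work, so the only real content lies in checking Expert's incentive compatibility under the new signalling strategy $\tilde σ$.

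First I would verify the Bayes's rule condition. Substituting $\tilde σ(m \mid t) = \sum_{m' \in M} σ^A(m \mid m') ξ^A(m' \mid t)$ into the formula from \Cref{def:p-eq}, the numerator becomes $\Prior(t)[\chi ξ^A(m \mid t) + (1-\chi)\sum_{m' \in M}σ^A(m \mid m')ξ^A(m' \mid t)]$, which is exactly the numerator in condition \ref{itm:belief} that defines $η^A$; the denominator matches analogously. So Client's belief $η^A$ is consistent with Bayes's rule in $\cG$ given the pair $(ξ^A, \tilde σ)$. Client's optimality (condition 2 of \Cref{def:p-eq}) depends only on $η^A$ and $\pvec$, which are unchanged, so it carries over verbatim from $\cG^A$.

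Next I would check Expert's incentive condition. Observe that $\tilde σ(m \mid t) > 0$ only if there exists some $m' \in \supp ξ^A(\wcard \mid t)$ with $σ^A(m \mid m') > 0$. For any such $m'$, condition \ref{itm:signalling} of $\cG^A$ implies that $m \in \arg\max_{m'' \in M} \sum_{a \in A} ρ^A(a \mid m'') u^E(a, \pvec)$. Crucially, this arg-max set depends only on $ρ^A$ and $\pvec$, not on the problem type $t$. Hence the support of $\tilde σ(\wcard \mid t)$ lies in this arg-max set for every $t \in T$, which is exactly condition 3 of \Cref{def:p-eq}.

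I do not expect any significant obstacle: the construction is a composition, and the signalling incentive compatibility in $\cG^A$ is written pointwise over $m'$, which is stronger than the type-by-type requirement in $\cG$ since the optimality set does not depend on $t$. The only point to be careful about is that $\tilde σ(m \mid t)$ may put weight only on messages coming from some $m' \in \supp ξ^A(\wcard \mid t)$, but this is precisely the set over which condition \ref{itm:signalling} of $\cG^A$ quantifies, so no gap arises.
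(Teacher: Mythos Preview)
Your proposal is correct and follows essentially the same approach as the paper: verify Bayes consistency by direct substitution of $\tilde\sigma$, note Client optimality is unchanged, and for Expert's incentive observe that any $m\in\supp\tilde\sigma(\cdot\mid t)$ comes from some $m'\in\supp\xi^A(\cdot\mid t)\subseteq\widetilde M$ with $\sigma^A(m\mid m')>0$, so condition~\ref{itm:signalling} forces $m$ into the (type-independent) arg-max. The paper spells out this last step with more set-theoretic notation, but the content is identical.
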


\begin{proof}
Let $(ξ^A, σ^A, ρ^A, η^A)$ be a $\pvec$-equilibrium of $\cG^A$ and let $\widetilde{M} \equiv \bigcup_{t \in T} \supp ξ^A(\wcard \mid t)$.
Then $(ξ^A, \tilde{σ}, ρ^A, η^A)$ is a $\pvec$-equilibrium of $\cG$ as long as the following two conditions are satisfied.
\begin{enumerate}
\item\label{itm:ass1}
  For all $m \in \widetilde{M}$,
  Client's posterior after receiving $m$ using Bayes's rule is $η^A (t \mid m)$:
  $$
  \eta ^ A (\type \mid \m) = \frac
      {\Prior(t) [\chi ξ^A (m \mid t) + (1 - \chi) \tilde{σ} (m \mid t)]}
      {\sum_{t' \in T} \Prior (t') [\chi ξ^A (m \mid t') + (1 - \chi) \tilde{σ} (m \mid t') ] }\quad \text{ for all $t \in T$;}
  $$

\item\label{itm:ass2}
  For all $t \in T$,  
  Expert's signalling strategy $\tilde{σ} (\wcard \mid t)$ is supported on 
  $$
  \widetilde{M}_1 \equiv \arg\max_{m \in \widetilde{M}} \sum_{a \in A} \rho ^A (a \mid \m) u^E (a, \pvec).
  $$ 
\end{enumerate}

The validity of Condition \ref{itm:ass1} follows from the construction of $\tilde{σ}$.
For Condition \ref{itm:ass2},
let $A (m') = \supp {σ}^A (\wcard \mid m')$ and 
$B (t) = \supp {σ} (\wcard \mid t)$.
We want to show that given $σ^A (\wcard \mid m')$ is supported on
$\widetilde{M}_1$ for all $m' \in \widetilde{M}$,
$\tilde{σ} (\wcard \mid t)$ 
is also supported on 
$\widetilde{M}_1$ for all $t \in T$.
That is,
$$
A (m') \subseteq \widetilde{M}_1 \quad \forall m' \in \widetilde{M}
\implies 
B (t) \subseteq \widetilde{M}_1 \quad \forall t \in T.
$$
Or equivalently,
$$
\bigcup \limits_{m' \in \widetilde{M}} A (m') \subseteq \widetilde{M}_1
\implies 
\bigcup \limits_{t \in T} B (t) \subseteq \widetilde{M}_1.
$$
It suffices to show that
$\bigcup _{m' \in \widetilde{M}} A (m') \supseteq \bigcup _{t \in T} B (t)$. 
Or equivalently,
\begin{changemargin}{0cm}{-1.0cm}
\begin{align}
  & \qquad \qquad \qquad m \in \bigcup \limits_{t \in T} B (t) 
  \implies 
  m \in \bigcup \limits_{m' \in \widetilde{M}} A (m') \notag
 \\
 \iff & \quad m \in B (t) \text{ for some } t \in T 
  \implies 
 m \in A (m') \text{ for some } m' \in \widetilde{M}
 \notag
 \\ 
 \iff &
 \tilde{σ} (m \mid t) > 0 \text{ for some } t \in T 
 \implies
 σ^A (m \mid m') > 0 \text{ for some } m' \in \widetilde{M}  \label{eq:finally}
\end{align}
\end{changemargin}
The validity of \eqref{eq:finally} follows from the
construction that $\tilde{σ} (m \mid t) = \sum_{m' \in M}  σ^A (m \mid m') ξ^A (m' \mid t)$.
\end{proof}

\Cref{clm:alternative-game} guarantees that $\cG$ has ``more $\pvec$-equilibria'' compared to $\cG^A$.
So the rest of the discussions in \Cref{sec:discussion-type} follow.

\subsection{Public credibility \label{app:public-credibility}}

This section contains discussions that are omitted from \Cref{sec:discuss-public-credibility}.
Using the same argument for \Cref{lmm:1}, we can restrict attention to price lists 
$\pvec \in \Pbar \cup P^2$.
Below we show that it is always optimal for Expert to set $p_1^{\pc} = l_1$.

\begin{claim}
  With public credibility, it is optimal for Expert to set $p_1^{\pc} = l_1$ for any $\prior \in (0,1)$ and $χ \in (0,1)$.
\end{claim}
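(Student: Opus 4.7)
The plan is to carry out a case analysis mirroring the proof of \Cref{prp:main}: for any $\pvec$ with $p_1 < l_1$, I would exhibit a $\pvec'$ with $p_1' = l_1$ satisfying $v_χ^{\pc}(\prior; \pvec') \geq v_χ^{\pc}(\prior; \pvec)$. First I would restrict attention to $\pvec \in \Pbar \cup P^2$ via the analog of \Cref{lmm:1}: since $v \leq l_1 - c_1$ pointwise on $P^1$, both $\qcav v$ and $\cav v$, and hence their convex combination $v_χ^{\pc}$, are bounded above by $l_1 - c_1$, which is attained at $(l_1, l_1 - c_1 + c_2) \in \Pbar$. Next I would rule out the subcase $p_2 < l_2 - \frac{l_2-l_1}{l_1} p_1$ by the joint-perturbation argument from \Cref{prp:main}: raising both $p_1$ and $p_2$ by a small $\varepsilon$ preserves $\qbar(\pvec) = (p_2 - p_1)/l_2$ and $\lambda(\pvec)$ while shifting both envelopes upward, so $v_χ^{\pc}$ strictly increases.

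In the remaining subcase $\pvec \in P^2$ with $p_2 \geq l_2 - \frac{l_2 - l_1}{l_1} p_1$, the cutoff $\qbar(\pvec) = (p_2 - l_1)/(l_2 - l_1)$ is independent of $p_1$, so I would compare $\pvec$ directly with $\pvec'' := (l_1, p_2)$. For $\prior \geq \qbar(\pvec)$, both $\qcav v$ and $\cav v$ equal $p_2 - c_2$ and are unchanged. For $\prior < \qbar(\pvec)$, $\qcav v$ jumps from $p_1 - c_1$ up to $l_1 - c_1$, and a short computation using $\partial_{p_1} \lambda(\pvec) = -1/\qbar(\pvec)$ shows that $\cav v(\prior; \pvec) = \lambda(\pvec)\prior + p_1 - c_1$ changes by $(l_1 - p_1)(1 - \prior/\qbar(\pvec)) > 0$. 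Hence $v_χ^{\pc}(\prior; \pvec'') > v_χ^{\pc}(\prior; \pvec)$ whenever $p_1 < l_1$. For $\pvec \in \Pbar$ with $p_1 < l_1$, direct comparison with $(l_1, l_1 - c_1 + c_2) \in \Pbar$ (feasible by assumption (iv)) gives $v_χ^{\pc}(\prior; (l_1, l_1-c_1+c_2)) = l_1 - c_1 > p_1 - c_1 = v_χ^{\pc}(\prior; \pvec)$.

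The main obstacle I anticipate is the boundary case where $\pvec'' = (l_1, p_2)$ falls outside $P^2$ because $p_2 \leq l_1 + c_2 - c_1$; then the formulas for the envelopes switch and the direct comparison in paragraph two does not apply. To handle this, I would invoke the uniform upper bound $v_χ^{\pc}(\prior; \pvec) \leq \cav v(\prior; \pvec) \leq p_2 - c_2 \leq l_1 - c_1$ (using that on $P^2$ the chord $\cav v$ lies between $p_1 - c_1$ and $p_2 - c_2$) and instead compare $\pvec$ with the equal-margin list $(l_1, l_1 - c_1 + c_2)$, whose $v_χ^{\pc}$-value is exactly $l_1 - c_1$. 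This covers every $(\prior, χ) \in (0,1)^2$ and closes the argument.
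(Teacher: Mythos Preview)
Your proposal is correct and follows essentially the same approach as the paper: both argue that for any $\pvec$ with $p_1 < l_1$ one can find a weakly better price list with $p_1 = l_1$, splitting on whether $p_2 - c_2$ exceeds $l_1 - c_1$ (your ``boundary case'' is exactly the paper's second case). The only difference is packaging: the paper handles the envelope comparison in one line (``straightforward to verify that $\qcav v$ and $\cav v$ weakly increase when $p_1$ is raised to $l_1$''), whereas you unbundle it into the Case~I/Case~II decomposition and the joint-perturbation step borrowed from the proof of \Cref{prp:main}; this makes your argument longer but fills in details the paper leaves implicit.
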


\begin{proof}
  Suppose there exists some optimal $\tilde{\pvec}= (\tilde p_1, \tilde p_2)$ where
  $\tilde p_1 < l_1$ and $\tilde p_1 - c_1 \le \tilde p_2 - c_2$.
  When $\tilde p_2 - c_2 \ge l_1 - c_1$,
  consider another price list $\pvec' =  (p_1', p_2')$
  with $p_1' = l_1$ and $p_2' = \tilde p_2$.
  It is straightforward to verify that
  $\qcav v (\prior ;\pvec') \ge \qcav v (\prior ; \tilde \pvec)$ and 
  $\cav v (\prior ;\pvec') \ge \cav v (\prior ; \tilde \pvec)$ for all $\prior \in (0,1)$.
  Therefore, $\pvec'$ is also optimal for Expert.

  When $\tilde p_2 - c_2 < l_1 - c_1$, then Expert's ex ante payoffs under $\tilde{\pvec}$ must be strictly less than $l_1 - c_1$. Again, Expert can obtain the payoff of $l_1 - c_1$ by setting $p_1 = l_1$ and $p_2 = l_1 - c_1 + c_2$ and disclosing all information to Client.  
  Such $\tilde{\pvec}$ cannot be Expert-optimal. 
\end{proof}

Given $p_1^{\pc} = l_1$, we have $\qbar (\pvec) = \frac{p_2 - l_1}{l_2 - l_1}$.
Then Expert's $\pvec$-equilibrium value 
$v_{χ}^{\pc}(\prior; \pvec) = χ \cav v (\prior ; \pvec) + (1 - χ) \qc v (\prior ; \pvec)$
can be written as:
\begin{equation} \label{eq:public-eq-value}
  v_{χ}^{\pc}(\prior; l_1, p_2) =
  \begin{cases}
     \chi \prior l_2 + (1 - \chi \prior) l_1 - c_1 - \frac{\chi \prior (l_2 - l_1) (c_2 - c_1)}{p_2 - l_1}  & \textif 
     \prior < \frac{p_2 - l_1}{l_2 - l_1} 
     \\
     p_2 - c_2 & \textif 
     \prior \geq \frac{p_2 - l_1}{l_2 - l_1} 
  \end{cases}
\end{equation}
Below we show that the optimal $p_2$ is either
$l_2$ or $\prior l_2 + (1-\prior) l_1$.

\begin{claim}\label{clm:public-optimal-p2}
  Given $p_1^{\pc} = l_1$, the optimal $p_2$ for Expert is either $l_2$ or 
  $\prior l_2 + (1-\prior) l_1$.      
\end{claim}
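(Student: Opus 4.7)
The plan is to exploit the piecewise structure of $v_{χ}^{\pc}(\prior; l_1, p_2)$ in \eqref{eq:public-eq-value} and show that on each piece, the function is monotonically increasing in $p_2$. The threshold dividing the two pieces is precisely $p_2 = \prior l_2 + (1-\prior) l_1$ (the value of $p_2$ at which $\prior = \qbar(l_1, p_2)$), and the upper bound of the admissible range for $p_2$ is $l_2$. Hence the maximum over $p_2 \in [l_1 - c_1 + c_2, l_2]$ must be attained either at the interior boundary $\prior l_2 + (1-\prior) l_1$ or at the upper endpoint $l_2$.

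First, consider $p_2 \le \prior l_2 + (1-\prior) l_1$, where $\prior \ge \qbar(l_1, p_2)$. On this subinterval, $v_{χ}^{\pc}(\prior; l_1, p_2) = p_2 - c_2$, which is strictly increasing in $p_2$. Therefore, the maximum on this subinterval is attained at $p_2 = \prior l_2 + (1-\prior) l_1$, yielding the payoff $\prior l_2 + (1-\prior) l_1 - c_2$.

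Second, consider $p_2 > \prior l_2 + (1-\prior) l_1$, where $\prior < \qbar(l_1, p_2)$. Here
\[
v_{χ}^{\pc}(\prior; l_1, p_2) = \chi \prior l_2 + (1 - \chi \prior) l_1 - c_1 - \frac{\chi \prior (l_2 - l_1) (c_2 - c_1)}{p_2 - l_1}.
\]
The only $p_2$-dependent term is the final one, and since $\chi \prior (l_2 - l_1)(c_2 - c_1) > 0$, the derivative with respect to $p_2$ is strictly positive. So the maximum on this subinterval is attained at the upper endpoint $p_2 = l_2$.

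Combining the two cases, the optimal $p_2$ must be either $l_2$ or $\prior l_2 + (1-\prior) l_1$, as claimed. No step here is subtle: the argument is essentially a one-dimensional monotonicity check on each piece, and the main task is just to make clear that the kink at $p_2 = \prior l_2 + (1-\prior) l_1$ is the only location (besides $p_2 = l_2$) where an interior optimum could arise, since away from the kink the objective is strictly increasing on both sides.
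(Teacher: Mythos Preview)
Your proof is correct and follows essentially the same approach as the paper: both arguments rewrite the threshold $\prior = \qbar(l_1,p_2)$ as $p_2 = \prior l_2 + (1-\prior) l_1$, observe that $v_{\chi}^{\pc}(\prior; l_1, p_2)$ is strictly increasing in $p_2$ on each of the two resulting pieces, and conclude that the maximum over $p_2 \in [l_1 - c_1 + c_2, l_2]$ is attained at either the kink or the upper endpoint $l_2$.
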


\begin{proof}
Fix some $\prior \in (0,1)$ and consider the maximal value of
$v_{χ}^{\pc}(\prior; l_1, p_2)$ over $p_2 \in [l_1 -c_1 + c_2, l_2]$.
Since the condition $\prior < \frac{p_2 - l_1}{l_2 - l_1}$ can be rewritten equivalently as
$p_2 > \prior l_2 + (1-\prior) l_1$,
\Cref{eq:public-eq-value} implies that
$v_{χ}^{\pc} (\prior; l_1, p_2) < 
v_{χ}^{\pc} (\prior; l_1, l_2)$
for all $p_2 > \prior l_2 + (1-\prior) l_1$.
Similarly, 
$v_{χ}^{\pc} (\prior; l_1, p_2) < 
v_{χ}^{\pc} (\prior; l_1, \prior l_2 + (1-\prior) l_1)$
for all $p_2 < \prior l_2 + (1-\prior) l_1$.
To sum up,   $v_{χ}^{\pc}(\prior; l_1, p_2)$
achieves its maximum at either
$p_2=l_2$ or $p_2=\prior l_2 + (1-\prior) l_1$.
\end{proof}

As per \Cref{clm:public-optimal-p2},
we can determine Expert's equilibrium value as the upper envelope of $v_{χ}^{\pc}(\prior; l_1, l_2)$
and $v_{χ}^{\pc}(\prior; l_1, \prior l_2 + (1-\prior) l_1)$, as depicted in \Cref{fig:modification}.
The corresponding optimal $p_2$ is: 
$$
p_2^{\pc} (\prior, χ) = \begin{cases}
  l_2 
    & \textif  \prior < q_\chi^{\pc}\\
  l_2 \text{ or } \prior l_2 + (1-\prior) l_1 
    & \textif  \prior = q_\chi^{\pc}\\
  \prior l_2 + (1-\prior) l_1
    & \textif  \prior > q_\chi^{\pc}
\end{cases}  
$$
where 
$q_\chi^{\pc} = \frac{c_2 - c_1}{(1 - \chi) (l_2 - l_1) + \chi (c_2 - c_1)}$.

\Cref{clm:public-optimal-p2} 
implies that there are essentially two kinds of equilibria.
First, Expert sets $p_1 = l_1$ and $p_2 = l_2$, 
profiting either $\qcav v(\prior ; \pvec)$ or
$\cav v(\prior ; \pvec)$ depending on his realized credibility.
Second, Expert discloses no information to Client and sets the monopoly price 
$p_2 = \prior l_2 + (1- \prior) l_1$.
In both kinds of equilibria, Client does not benefit from Expert's services.

\end{document}